\keywords{Distributed dynamic graph algorithms, Smoothed analysis, Flooding}
\newcommand{\etal}{\textit{et al.}\xspace}
\newenvironment{claim-repeat}[1]{\begin{trivlist}
		\item[\hspace{\labelsep}{\bf\noindent Claim \ref{#1} }]\em }%
	{\end{trivlist}}
\newenvironment{theorem-repeat}[1]{\begin{trivlist}
		\item[\hspace{\labelsep}{\bf\noindent Theorem \ref{#1} }]\em }%
	{\end{trivlist}}
\newcommand{\eps}{\epsilon}
\newcommand{\Expc}[1]{\mathbb{E}\left[{#1}\right]}
\newcommand{\size}[1]{\ensuremath{\left|#1\right|}}
\newcommand{\set}[1]{\left\{ #1 \right\}}
\DeclareMathOperator{\roundp}{round}
\newcommand{\Gcal}{\mathcal{G}}
\newcommand{\Hcal}{\mathcal{H}}
\newcommand{\bbN}{\mathbb{N}}
\newcommand{\bbR}{\mathbb{R}}
\newcommand{\ceil}[1]{\lceil #1 \rceil}
\newcommand{\floor}[1]{\lfloor #1 \rfloor}
\newcommand{\gold}{G_{\mathrm{old}}}
\newcommand{\gtmp}{G_{\mathrm{adv}}}
\newcommand{\gnew}{G_{\mathrm{new}}}
\newcommand{\etmp}{E_{\mathrm{adv}}}
\newcommand{\gallowed}{\Gcal_{\mathrm{allowed}}}
\newcommand{\ind}[1]{\mathds{1}_{#1}}
\DeclareMathOperator{\polylog}{polylog}
\newcommand{\hide}[1]{}
\title{Models of Smoothing in Dynamic Networks}
\author
{Uri Meir}
{Tel Aviv University, Israel}
{}
{}
{}
\author
{Ami Paz}
{Faculty of Computer Science, Universit\"at Wien, Austria}
{}
{}
{Supported by the Austrian Science Fund (FWF): P 33775-N, Fast Algorithms for a Reactive Network Layer.}
\author
{Gregory Schwartzman}
{Japan Advanced Institute of Science and Technology, Japan}
{}
{}
{This work was supported by JSPS Kakenhi Grant Number JP19K20216 and~JP18H05291.} %FUNDING
\authorrunning{U. Meir, A. Paz, and G. Schwartzman}
\begin{document}
\maketitle

\begin{abstract}
Smoothed analysis is a framework suggested for mediating gaps between worst-case and average-case complexities. In a recent work, Dinitz et al.~[Distributed Computing, 2018] suggested to use smoothed analysis in order to study dynamic networks. %, a field that have seen great progress in the last years.
Their aim was to explain the gaps between real-world networks that function well despite being dynamic, and the strong theoretical lower bounds for arbitrary networks.
To this end, they introduced a basic model of smoothing in dynamic networks, where an adversary picks a sequence of graphs, representing the topology of the network over time, and then each of these graphs is slightly perturbed in a random manner.

The model suggested above is based on a per-round noise, and our aim in this work is to extend it to models of noise more suited for multiple rounds. This is motivated by long-lived networks, where the amount and location of noise may vary over time.
To this end, we present several different models of noise.
First, we extend the previous model to cases where  the amount of noise is very small.
Then, we move to more refined models, where the amount of noise can change between different rounds, e.g., as a function of the number of changes the network undergoes. We also study a model where the noise is not arbitrarily spread among the network, but focuses in each round in the areas where changes have occurred. Finally, we study the power of an adaptive adversary, who can choose its actions in accordance with the changes that have occurred so far.
We use the flooding problem as a running case-study, presenting very different behaviors under the different models of noise, and analyze the flooding time in different models.
\end{abstract}

%\vspace{-20mm}
\section{Introduction}
Distributed graph algorithms give a formal theoretical framework for studying networks. There is abundant literature on distributed graph algorithms modeling \emph{static} networks, and recently, an increasing amount of work on \emph{dynamic} networks.
Yet, our understanding of the interconnections between the theoretical models and real world networks is unsatisfactory. That is, we have strong theoretical lower bounds for many distributed settings, both static and dynamic, yet real-world communication networks are functioning, and in a satisfactory manner.

One approach to bridging the gap between the strong theoretical lower bounds and the behaviour of real-world instances, is the idea of \emph{smoothed analysis}.
Smoothed analysis was first introduced by Spielman and Teng~\cite{SpielmanT09, SpielmanT04}, in an attempt to  explain the fact that some problems admit strong theoretical lower bounds, but in practice are solved on a daily basis.
The explanation smoothed analysis suggests for this gap, is that lower bounds are proved using very specific, pathological instances of a problem, which are highly unlikely to happen in practice. They support this idea by showing that some lower bound instances are extremely \emph{fragile}, i.e., a small random perturbation turns a hard instance into an easy one.
Spielman and Teng applied this idea to the simplex algorithm, and showed that, while requiring an exponential time in the \emph{worst case}, if we apply a small random noise to our instance before executing the simplex algorithm on it, the running time becomes polynomial in expectation.
%Specifically, when considering smoothed analysis of an algorithm we are interested in its performance, on average, when input instance are subjected to some noise. If a lower bound instance becomes easy, after adding a small amount of noise, we conclude that it is indeed very fragile. This was the case for the celebrated result of \cite{SpielmanT09,SpielmanT04}, which showed that despite admitting to exponential \emph{worst case} lower bounds, the simplex algorithm, has a polynomial running time under smoothed analysis.

%The idea behind smoothed analysis is to change the game as follows: first an input is being chosen, then some noise is added on top of it, to create a distribution over possible inputs, and then we run the algorithm.

%It was then shown that some pathological constructions (for some problems) do not go through: adding even a small amount of noise around such a specific construction breaks it completely and outputs a distribution of inputs over which the problem is solvable with fewer resources.

% The amount of noise added to the input instance is parameterized, and we are interested how this parametrization affects the performance guarantees of the algorithm. That is, smoothed analysis can be seen as the middle ground between a worst case input and a completely random instance. Thus, we expect higher noise to improve the performance of an algorithm.

% The amount of noise added is usually parameterized, and we wish to show how easy the problem becomes when more and more noise is added, hurting the adversary's ability to devise hard instances.

Smoothed analysis was first introduced to the distributed setting in the seminal work of Dinitz \etal~\cite{DFGN18}, who considered distributed dynamic networks. These are networks that changes over time, and capture many real world systems, from Blockchain, through vehicle networks, and to peer-to-peer networks. They are modeled by a sequence of communication graphs, on which the algorithm needs to solve a problem despite the changes in the communication topology.
Adapting the ideas of smoothed analysis in this setting is not an easy task.
First, while the classical smoothed analysis is applied to a single input, in the dynamic setting we deal with an infinite sequence of communication rounds, on ever-changing graphs.
And second, defining the correct perturbation which the input undergoes is more challenging, as the input is discrete, as opposed to the continuous input of the simplex algorithm, where Gaussian noise is a very natural candidate.

%Note that for dynamic inputs, the lower bound
%We think of any lower bound on a dynamic network as an adversarial input. The difference is that here the adversary gets to act many times, once for every round. It is now unclear how to parameterize the amount of noise added on top of such hard instances.

To address the above challenges, Dinitz \etal\ fix a \emph{noise parameter} $k>0$, and then, after the adversary picks an infinite sequence of graphs $\set{G_i}$, %(Note that the adversary is oblivious),
they derive a new series of graphs $\set{H_i}$ by perturbing every $G_i$ with an addition or deletion of roughly $k$ random edges.
%every round an adversary decides upon some new network topology, $\gtmp$, which is the perturbed with parameter $k$ to result in the actual topology, $\gnew$. received by the algorithm.
Specifically, the perturbation ($k$-smoothing) is done by choosing uniformly at random a graph within Hamming distance at most $k$ of $G_i$
(i.e., a graph that is different from $G_i$ by at most $k$  edges), which also has some desired properties (say, connectivity).
Note that in the typical case $\Omega (k)$ edges are perturbed, as most of the graphs with Hamming distance at most $k$ from $G_i$ have Hamming distance $\Omega (k)$ from it.
Using this model, they analyze the problems of flooding, hitting time, and aggregation.

In this paper we present a study of \emph{models of smoothing}, or put differently, a study of models of noise in dynamic networks.
To this end, we focus on one of the three fundamental problems presented in the previous work --- the flooding problem.
In this problem, a source vertex has a message it wishes to disseminate to all other vertices in the network. In each round, every vertex which has the message forewords it to all of its neighbors, until all vertices are informed. First, note that the problem is trivially solvable in static networks, in diameter time. Second, in the dynamic setting, it is easy to construct a sequence of graph where flooding takes $\Omega(n)$ rounds, even if each of them has a small diameter.
It is thus not surprising that adding random links between vertices may accelerate the flooding process, and indeed, it was shown in~\cite{DFGN18} that in $k$-smoothed dynamic networks, the complexity of flooding drops to $\tilde{\Theta}(n^{2/3}/k^{1/3})$ (where the $\tilde{\cdot}$ sign hides $\polylog n$ factors). Note the huge improvement, from $\Omega(n)$ to $\tilde{O}(n^{2/3})$, even for $k=1$.

%For the first challenge, they suggest using a fixed parameter (amount) of noise for each and every round. Some results are shown \todo{elaborate}

\subsection{Motivation}
The significant step made by Dinitz \etal in introducing smoothed analysis to the distributed dynamic environment left many fascinating questions unanswered. One natural direction of research has to do with applying smoothed analysis for a variety of problems, but here, we take a different approach. In an attempt to understand the essential properties of smoothed analysis in our setting, we focus on questions related to the very basic concept of smoothing and study several models of noise. We outline some of the related questions below.

\subparagraph{The curious gap between $k=0$ and $k=1$} Dinitz \etal show a tight bound of $\tilde{\Theta}(n^{2/3} / k^{1/3})$ for the problem of flooding in a dynamic networks with noise $k$. That is, as $k$ decreases, flooding becomes harder, but only up to $\tilde{\Theta}(n^{2/3})$ for a constant $k$. However, when there is no noise at all, a \emph{linear}, in $n$, number of rounds is required. That is, there is a curious gap between just a tiny amount of noise and no noise at all, which stands in sharp contrast with the smooth change in the flooding time as a function of $k$, when $k$ ranges from $1$ and to $\Omega(n)$. One may wonder if there is a natural way to extend the model presented by Dinitz \etal to bridge this gap.

\subparagraph{An oblivious adversary} The results of Dinitz \etal assume an oblivious adversary. That is, the evolution of the network must be decided by the adversary in advance, without taking into account the noise added by the smoothing process. It is natural to ask what is the power of an \emph{adaptive} adversary compared to an oblivious one. %\footnote{Adaptivity can also be weakened to a network that only sees the added noise of previous rounds, but not the internal state of the algorithm. For the problem of flooding, the algorithm is rather straightforward, and so the two notions collapse into one}
As smoothed analysis aims to be the middle ground between worst-case and average-case analysis, it is very natural to consider the effect of noise on the strongest possible adversary, i.e., an adaptive adversary.

\subparagraph{Change-dependent noise} The type of noise considered by Dinitz \etal remains constant between rounds, regardless of the topology or the actions of the adversary, which corresponds to ``background noise'' in the network. However, it is very natural to consider cases where the amount of noise itself changes over time.
More specifically, if we think of a noise as an artifact of the changes the adversary performs to the system, and thus, it is natural to expect different amounts of noise if the adversary performs a single change, or changes the whole topology of the graph.
Hence, a natural model of noise in a network is one where the added noise is \emph{proportional} to the amount of change attempted by the adversary, that is, $k$ is proportional to the Hamming distance between the current graph and the next one. A different, natural definition of a changes-dependent noise, is where the noise occurs only in the changing edges, and not all the graph.

%the type of fixed noise-per-round through over whole process can correspond well to background noise occurring in an unstable network. However, in many cases, it would make much more sense to quantify the amount of noise according to the dynamic behaviour of the network. For example, rounds in which many changes occur should intuitively invoke a lot more noise than rounds in which the network practically stands still and stays the same.

\subsection{Our results}
\begin{table}[tb]
    \centering
    % \begin{tabular}{|m{3.5cm}|m{4.2cm}|m{4.2cm}|m{2.5cm}|}
    \begin{tabular*}{\linewidth}{@{}l@{\;\;\;}l@{\;\;}l@{\;\; }l@{}}
    % \begin{tabular*}{\linewidth}{@{}llll}
    \toprule
        Model & Upper bound  & Lower bound & Reference \\
        \toprule

        \begin{tabular}{@{}l@{}}
        Integer noise\\ Non-adaptive adv.
        \end{tabular}
        & $O\left(n^{2/3} (1 / k)^{1/3} \log n \right)$ & $\Omega\left(n^{2/3}/k^{1/3}\right)$, for $k \leq \sqrt{n}$
        & Dinitz \etal~\cite{DFGN18} \\
        \midrule

        \begin{tabular}{@{}l@{}}
        Fractional noise\\ Non-adaptive adv.
        \end{tabular}
        & $O\left(\min\set{n, n^{2/3} (\log n / k)^{1/3}}\right)$ & $\Omega\left(\min\set{n,n/k, n^{2/3}/k^{1/3}}\right)$
        & Thm.~\ref{thm: fractional UB},~\ref{thm: fractional LB 1},~\ref{thm: fractional LB 2}\\
        \midrule

        \begin{tabular}{@{}l@{}}
        Fractional noise\\ Adaptive adv.
        \end{tabular}
        & $O\left(\min\set{n, n \sqrt{\log n / k}}\right)$ & $\Omega(\min \set{n, n \log k / k})$
        & Thm.~\ref{thm: fractional adaptive UB},~\ref{thm: fractional adaptive LB}\\
        \midrule

        \begin{tabular}{@{}l@{}}
        Proportional noise\\ Non-adaptive adv.
        \end{tabular}
        & $O\left(n^{2/3} \cdot ( D \log n / \eps )^{1/3}\right)$ &
        & Thm.~\ref{thm: FP - UB}\\
        \midrule

        \begin{tabular}{@{}l@{}}
        Proportional noise\\ Adaptive adv.
        \end{tabular}
        & $O(n)$ & $\Omega(n)$
        & Thm.~\ref{thm: FP - adaptive LB}\\
        \hline
        Targeted noise
        & $O\left(\min\set{n, D \log n /\eps^{D^2}}\right)$
        &
        % \begin{tabular}{@{}l@{}}
        $\Omega(n)$, for $D\in\Theta(\log n)$
        % \end{tabular}
        & Thm~\ref{thm: flooding targeted UB},~\ref{thm: flooding targeted LB}\\
        \bottomrule
    \end{tabular*}
    \caption{Bounds on flooding time in different models of smoothing}
    \label{table:results}
\end{table}
To address the above points of interest,
we prove upper and lower bounds, as summarized in Table~\ref{table:results}.
First, we show a natural extension of the previously presented noise model to \emph{fractional} values of $k$.
For $k\geq 1$, our model roughly coincides with the prior model and results, while for $0<k<1$, in our model a single random change occurs in the graph with probability $k$.
In our model, we show a bound of $\tilde{\Theta}(\min \{n^{2/3}/k^{1/3},n\})$ for flooding, for values of $k$ ranging from $k=0$ to $k=\Theta(n)$, even fractional --- see theorems~\ref{thm: fractional UB}, \ref{thm: fractional LB 1}, and~\ref{thm: fractional LB 2}.
% \ami{For low values of $k$,}
The flooding time thus has a clean and continuous behavior, even for the range $0 < k < 1$ which was not studied beforehand, providing a very natural extension of the previous model.
% \ami{For high values of $k$, note that for $k=\Omega(n\log n)$, the added edges in each round form an expander graph, thus flooding takes $O(\log n)$ rounds, hence our work covers almost all the relevant range of values.
% [[The range $n/16<k<n\log n$ is unclear, as the flooding time drops from $n^{1/3}$ to $\log n$, right?]]
% }

Next, we focus our attention on an adaptive adversary, that can choose the next graph depending on the smoothing occurred in the present one.
Here, we show that the added power indeed makes the adversary stronger, and yet, flooding can be solved in this case faster than in a setting where no smoothing is applied.
Specifically, in theorems~\ref{thm: fractional adaptive UB} and \ref{thm: fractional adaptive LB} we show that in this setting flooding can be done in $\tilde{O}(n/\sqrt{k})$ rounds, and takes at least $\tilde{\Omega}(n / k)$ rounds.
This result presents a strict separation between the power of an adaptive and an oblivious adversaries.

We then turn our attention to a different type of noise, and introduce two new models of \emph{responsive noise} --- noise which depends on the changes the adversary preforms.
% The first is the \emph{proportional noise} model, where the noise is arbitrarily scattered, as in the model presented by Dinitz \etal, but its amount is now some small fraction, $\epsilon$, of the changes proposed by the adversary.
% This aims at modeling a reactive effect, where every change of the adversary can cause some random changes in the network, and the more changes are performed, the more noise is introduced.
%In this setting, an adversary is ``punished'' for performing large changes, as these push the network structure towards a random graph, which helps flooding. %On the other hand, if the adversary performs no changes, there is no noise added to the graph.
The goal of studying responsive noise is to better understand systems where the noise is an artifact of the changes, and more changes induce more noise.
We consider two, completely different cases:
if only the \emph{amount} of noise depends on the changes, then the system is less stable than in the prior models, and flooding can be delayed a lot by the adversary.
On the other hand, if the same amount of noise is \emph{targeted} at the changing links themselves, then the system remains stable, and flooding occurs much faster.
In both models, our results surprisingly depend
on a new attribute --- the static diameter of the graph, $D$.

The first model of responsive noise we introduce is the \emph{proportional noise} model, where the noise is randomly spread among the graph edges as before, and its amount is proportional to the number of proposed changes.
We consider two types of adversaries in this setting --- adaptive, and oblivious.
Theorem~\ref{thm: FP - UB} shows that $\tilde{O}(\min \{n^{2/3}D^{1/3}/\eps^{1/3},n\})$ rounds are sufficient for flooding in this model with an oblivious adversary.
Here, the static diameter $D$ comes into play, since the adversary can now force change-free rounds: if the adversary does not make any change, no noise occurs, the graph remains intact and no random ``shortcut edges'' are created.
Current lower bounds for flooding with oblivious adversaries seem to require many changes, but in the proportional noise model this results in the addition of many random edges, thus speeding up the flooding process.
In addition, the upper bound suggests that the static diameter $D$ should come into play in the lower bounds, which is not the case in previous constructions.
While we believe our upper bound is tight, proving lower bounds appears to be beyond the reach of current techniques.

In the proportional noise model with adaptive adversary, we encounter another interesting phenomenon.
Adjusting the upper bound analysis in a straightforward manner gives the trivial upper bound of $O(n)$, and for a good reason: an adaptive adversary can slow down the flooding time all the way to $\Omega(n)$, as shown in Theorem~\ref{thm: FP - adaptive LB}.
The adversary for this lower bound makes only a few changes in each round, and only in necessary spots (using its adaptivity). The fact that the noise is proportional boils down to almost no noise occurring, which allows the adversary to maintain control over the network.

The second model of responsive noise we study is that of a \emph{targeted noise}.
Here, the expected amount of noise applied is roughly the same as above, but only the edges that undergo changes are perturbed by the noise. More concretely, each change proposed by the adversary does not happen with some probability $\epsilon$.
The aim here is to model networks where every attempted change may have some probability of failure.
In this setting, the case of an adaptive adversary seems more natural; however, we prove our upper bound for an adaptive adversary, and the lower bound for an oblivious one, thus handling the harder cases of both sides.

Our upper bound shows significant speedup in flooding on graphs with constant static diameter --- $O(\log n)$ rounds suffice for flooding with targeted noise.
% , as opposed to $\tilde{\Theta}( n^{2/3}/\eps^{1/3})$ in the non-targeted, proportional model.
This phenomenon holds even for higher static diameters:
% whenever $D = O(\sqrt{\log_{1/\epsilon} n})$,
Theorem~\ref{thm: flooding targeted UB} gives an upper bound of $O((D\log_{1/\epsilon} n ) / \epsilon^{D^2})$ rounds for flooding.
This improves upon the trivial $O(n)$ whenever $D = O(\sqrt{\log_{1/\epsilon} n})$. Finally, in Theorem~\ref{thm: flooding targeted LB} we show that for larger static diameter, $D = \Theta(\log n)$, the number of rounds required for flooding is~$\Omega(n)$.

% Surprisingly, this strong lower bound on the flooding time holds even for an oblivious adversary --- that is, even if the adversary is weak, it still can delay the flooding.

\subparagraph{Our techniques}
Analysing our new models of smoothing require a new and more global techniques. While we adopt the results of \cite{DFGN18} for changes in a single round, our models introduce new technical challenges, as they require multiple-round based analysis.

The main technical challenge for proving upper bounds comes from the fact that one cannot even guarantee that noise occurs at every round, and when it does --- the amount of noise is not fixed through the execution of the whole algorithm.
This requires us to conduct a much more global analysis, taking into account sequences of rounds with a varying amount of noise, instead of analyzing a single round at a time.
In several cases, the static diameter~$D$ appears as a parameter in our results and analysis: in our model, the adversary can force change-free rounds where no noise occurs, in which case flooding happens on a static graph. %Furthermore, due to the changing amount of noise we must take a \emph{global} approach in our analysis.

%Our upper bounds techniques are inspired by the method used for studying flooding in~\cite{DFGN18}. We adopt the general idea and adjust it for the different models to derive various upper bounds, that are tight or near-tight in most cases.

In an attempt to study the exact limitations of our noise models, we present specifically crafted lower bound networks for each model.
Note that in many models our adversary is adaptive, potentially making it much stronger. This makes our lower bound more strategy-based, and not merely a fixed instance of a dynamic graph.
We revise the original flooding lower bound of Dinitz \etal, in order to make it more applicable to other models. We present a more detailed and rigorous proof, that indeed implies tight bounds in most of our models, and specifically, when considering adaptive adversaries.

\hide{Finally, we also show some results that combine the the type of noise defined in \cite{DFGN18} with our definition of responsive noise. Intuitively, one can see the noise defined in \cite{DFGN18} as a sort of noise which is always present in the network, a kind of \emph{background noise}.
We show that as long as no round permits more than a linear number of changes made by the adversary, the background noise \emph{overcomes} the responsive one. }
%The upper bound goes through easily, as the invoked noise can only add noise on top of the background noise. We note that we still need to adjust the argument of \cite{DFGN18} in order to deal with changing amount of noise through different rounds of the process.
%The lower bound here again needs to be adjusted. Here we can specifically see that the original spooling graph is not enough to show matching lower bound, as it requires a linear number of changes, and whenever $\eps n >> k$, the invoked noise overrules the background noise.
%However, using the new version with subtle changes, we show tightness of the upper bound (up to logarithmic factors).

\subsection{Related work}
Smoothed analysis was introduced by Spielman and Teng~\cite{SpielmanT09,SpielmanT04} in relation to using pivoting rules in the simplex algorithm. Since, it have received much attention in sequential algorithm design; see, e.g., the survey in~\cite{SpielmanT09}. The first application of smoothed analysis to the distributed setting is due to Dinitz \etal~\cite{DFGN18}, who apply it to the well studied problems of aggregation, flooding and hitting time in dynamic networks, as described above.
Chatterjee \etal~\cite{abs-1911-02628} considered the problem of a minimum spanning tree construction in a distributed setting of a synchronous network, where the smoothing is in the form of random links added in each round, which can be used for communication but not as a part of the tree.

The field of distributed algorithm design for dynamic networks has received considerable attention in recent years~\cite{Michail16}.
Most of the works considered models of adversarial changes~\cite{KuhnLO10,AugustinePRU12, DuttaPRSV13,GhaffariLN13, HaeuplerK11, KuhnOM11,BambergerKM19, DCKPS2020},
while others considered random changes, or random interactions between the vertices~\cite{CaiSZ20,ClementiST15,DenysyukR14,GasieniecS18,KowalskiM2020}.
Yet, as far as we are aware, only the two aforementioned works take the smoothed analysis approach in this context.
% Other models of dynamical systems include, among others,
% population protocols~\cite{GasieniecS18}, where the agents interact in randomly chosen pair,
% and
% temporal graphs~\cite{Michail16}, where the interaction pattern is predetermined.

\section{Preliminaries}

\subsection{Dynamic graphs}
All graphs are of the form $G=(V,E)$, where $V=[n]=\{1,\ldots,n\}$ is the set of vertices.
%For a graph $G_i$, we naturally denote by $E_i$ its set of edges.
A dynamic graph $\Hcal$ is a sequence $\Hcal=(G_1,G_2,\ldots)$ of graphs, $G_i=(V,E_i)$ on the same set of vertices, which can be finite or infinite.
The \emph{diameter} of a (static) graph is the largest distance between a pair of vertices in it, and the \emph{static diameter} of a dynamic graph $\Hcal$ is the minimal upper bound $D$ on all the diameters of the graphs in the sequence $\Hcal$.

We study synchronous distributed algorithms in this setting, where the time is split into rounds, and in  round $i$ the vertices communicate over the edges of the graph $G_i$.

Given two graphs $G=(V,E)$ and $G'=(V,E')$ on the same set of vertices, we denote $G-G'=(V,E\setminus E')$.
We also denote $G\oplus G'=(V,E\oplus E')$, where $E\oplus E'$ is the set of edges appearing in exactly one of the graphs.
The size of set $E\oplus E'$ is called the \emph{Hamming distance between $G$ and $G'$}, and is denoted by $|G\oplus G'|$.

\subsection{Models of noise}
Our smoothed analysis is based on three models of noise, defined in this section.

For a single step, we recall the definition of $t$-smoothing~\cite{DFGN18}, and add the new notion of $\epsilon$-smoothing, for a parameter $0 < \epsilon < 1$.
At each step, we think of $\gold$ as the current graph, and of $\gtmp$ as the future graph suggested by the adversary.
The actual future graph, $\gnew$, will be a modified version of $\gtmp$, randomly chosen as a function of $\gold$ and $\gtmp$.
In addition, we consider the set $\gallowed$ of \emph{allowed graphs} for a specific problem.
For flooding, this is just the set of all connected graphs.

\hide{
We start with two auxiliary definition. Let $0<\epsilon<1$ and $k\in\bbR_{+}$ be two parameters and $\gold$, $\gtmp$ two graphs.
We think of $\gold$ as the current graph, and of $\gtmp$ as the future graph suggested by the adversary.
The actual future graph, $\gnew$, will be a modified version of $\gtmp$, randomly chosen as a function of $\gold$ and $\gtmp$.
In addition, we consider the set $\gallowed$ of \emph{allowed graphs} for a specific problem.
For flooding, this is just the set of all connected graphs.
}

\hide{
We start by defining a single smoothing step, extending the definition of $k$-smoothing from~\cite{DFGN18} to allow also targeted noise.
}

\begin{definition}
Let $0\leq\epsilon\leq 1$ and $t \in \mathbb N$ be two parameters, $\gallowed$ a family of graphs, and $\gold$ and $\gtmp$  two graphs in $\gallowed$.
\begin{itemize}
	\item
	A $t$-smoothing of $\gtmp$ %with respect to $\gold$
	is a graph $\gnew$ which is picked uniformly at random from all the graphs of $\gallowed$ that are at Hamming distance at most $t$ from $\gtmp$.
	The parameter $t$ is called the \emph{noise parameter.}
	%$\editdist(\gtmp,t)\cap \gallowed$, where $t=\roundp(k)$ is the smoothing parameter.
	
	\item
	An $\epsilon$-targeted smoothing of a graph $\gtmp$ with respect to $\gold$ is a graph $\gnew$ which is constructed from $\gtmp$ by adding to it each edge of $\gold-\gtmp$ independently at random with probability $\epsilon$, and removing each edge of $\gtmp-\gold$ with the same probability.
	If the created graph is not in $\gallowed$, the process is repeated.
\end{itemize}	
\end{definition}

%The next definition is somewhat different, as we think of the targeted smoothing as motivated by failed modifications, thus each edge is affected by noise independently. The models are still comparable for $k=0$ (no background noise), as for the same value of $\eps$, the expected number of noisy edges is the same, only the focus of which edges are chosen is different.

We are now ready to define the three types of smoothing for dynamic graphs. The first extends the definition of~\cite{DFGN18} for non-integer values $k\in\bbR_{+}$, and the other two incorporate noise that depends on the latest modifications in the graph.

For a positive real number $x$, we define the random variable $\roundp(x)$, which takes the value $\ceil{x}$ with probability $x - \floor{x}$ (the fractional part of $x$), and $\floor{x}$ otherwise.

\begin{definition}
\label{def: smoothed_dynamic_network}
Let $0\leq\epsilon\leq 1$ and $k\in\bbR_{+}$ be two parameters, and $\gallowed$ a family of graphs.

Let $\Hcal=(G_1,G_2,\ldots)$ be a dynamic graph, i.e., sequence of ``temporary'' graphs. Let $G'_0$ be a graph (if $G'_0$ is not explicitly specified, we assume $G'_0=G_1$).
\begin{itemize}
	\item
	A $k$-smoothed dynamic graph is the dynamic graph $\Hcal'=(G'_0,G'_1,G'_2,\ldots)$ defined from $\Hcal$, where %$G'_0$ is either given, or otherwise we set $G'_0=G_1$, and
	for each round $i>0$,  $G'_i$ is the $t_i$-smoothing of $G_i$, where $t_i=\roundp(k)$.

	\item
    An $\eps$-\emph{proportionally} smoothed dynamic graph $\Hcal'=(G'_0,G'_1,G'_2,\ldots)$ is defined from $\Hcal$, where %$G'_0$ is either given, or otherwise we set $G'_0=G_1$, and
    for each round $i>0$,  $G'_i$ is the $t_i$-smoothing of $G_i$, %w.r.t.~$G'_{i-1}$,
    where $t_i=\roundp(\epsilon \cdot \size{G'_{i-1}\oplus G_{i}})$.
	
	\item
	An $\epsilon$-\emph{targeted} smoothed dynamic graph is the dynamic graph $\Hcal'=(G'_0,G'_1,G'_2,\ldots)$ defined iteratively from $\Hcal$, where %$G'_0$ is either given, or otherwise we set $G'_0=G_1$, and
	for each round $i>0$,  $G'_i$ is the $\epsilon$-targeted smoothing of $G_i$ w.r.t.~$G'_{i-1}$.
\end{itemize}

All the above definitions also have \emph{adaptive} versions, where each $\Hcal$ and $\Hcal'$ are defined in parallel: the graph $G_i$ is chosen by an adversary after the graphs $G_0,G_1,G_2,\ldots,G_{i-1}$ and $G'_0,G'_1,G'_2,\ldots,G'_{i-1}$ are already chosen, and then $G'_i$ is defined as above.
We use adaptive versions for the first two definitions.
\end{definition}

\subsection{Auxiliary lemmas}
We use two basic lemmas, which help analyzing the probability of the noisy to add or remove certain edges from a given set of potential edges.
These lemmas address a single round, we apply them with a different parameter in each round.
The lemmas appeared as Lemmas~4.1 and 4.3 in~\cite{DFGN18},
where they were used with the same parameter throughout the process.

%then any of these are used in analysis, we apply them for  e apply the lemma separately forwhich is the  positive real number. of the  We do note, however, that when applying the lemma we use an \emph{integer} noise parameter (as in~\cite{DFGN18}), relevant for a specific round. To avoid confusion, we use $t$ (as in definition~\ref{def: smoothed_dynamic_network}).
%from~\cite{DFGN18}, that explain the behaviour of \emph{one step} in the dynamic smoothed system, and are therefore applicable in our model as well.
%Although these lemmas concern with $k$-smoothing of a graph,

%The only difference is that our smoothing parameter changes at each step, depending on the magnitude of changes that occurred in the network (the actual smoothing parameter is denoted below by $t$).

\begin{lemma}%[Lemma 4.1 in~\cite{DFGN18}]
\label{lem: LB hitting a set of edges}
    There exists a constant $c_1>0$ such that the following holds.
    Let $\gtmp \in \gallowed$ be a graph, and $\emptyset \neq S \subseteq \binom{[n]}{2}$ a set of potential edges.
    Let $t\in\bbN$ be an integer such that $t\leq n/16$ and $t\cdot \size{S} \leq n^2/2$.

    If $\gnew$ is a $t$-smoothing of $\gtmp$, then the probability that $\gnew$ contains at least one edge from $S$ is at least $c_1 \cdot t \size{S} / n^2$.
\end{lemma}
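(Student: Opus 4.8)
\textbf{The plan} is to recast the statement as a counting inequality and to prove it by a switching (injection) argument between the graphs that meet $S$ and the graphs that avoid it. Throughout I use that $\gallowed$ is closed under adding edges (as is the family of connected graphs relevant for flooding). Let $\mathcal{B}=\{H\in\gallowed:\size{H\oplus\gtmp}\le t\}$ be the support of the $t$-smoothing, and write $\mathcal{B}=\mathcal{B}_S\cup\mathcal{B}_{\bar S}$, where $\mathcal{B}_S$ consists of the graphs containing at least one edge of $S$. Since $\gnew$ is uniform on $\mathcal{B}$, the quantity to lower bound is $\size{\mathcal{B}_S}/(\size{\mathcal{B}_S}+\size{\mathcal{B}_{\bar S}})$, and because the hypothesis gives $t\size{S}\le n^2/2$, it suffices to prove an inequality of the form $\size{\mathcal{B}_S}\ge c\cdot(t\size{S}/n^2)\cdot\size{\mathcal{B}_{\bar S}}$ for some absolute constant $c>0$; this yields the lemma with $c_1=c/(c/2+1)$.

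\textbf{The easy part.} The crucial structural fact is that adding an edge never leaves $\gallowed$. Hence, for every $H\in\mathcal{B}_{\bar S}$ lying strictly inside the ball ($\size{H\oplus\gtmp}\le t-1$) and every $e\in S$, the graph $H+e$ is in $\gallowed$, still within Hamming distance $t$ of $\gtmp$, and meets $S$, so $H+e\in\mathcal{B}_S$. Counting the pairs $(H,e)$ and noting that each $H'\in\mathcal{B}_S$ arises as $H+e$ for at most $\size{S\cap H'}\le\size{S}$ pairs, we get $\size{\{H\in\mathcal{B}_{\bar S}:\size{H\oplus\gtmp}\le t-1\}}\le\size{\mathcal{B}_S}$ --- far more than needed for those graphs. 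The same idea also handles boundary graphs ($\size{H\oplus\gtmp}=t$) whenever some edge of $S$ lies in $\gtmp$: then $e\in\gtmp\setminus H$, so $H+e$ moves one step \emph{toward} $\gtmp$ and again lands in $\mathcal{B}_S$. Consequently, replacing $S$ by $S\setminus\gtmp$ at the cost of a factor $2$ in $\size{S}$, we may assume that every edge of $S$ is a non-edge of $\gtmp$.

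\textbf{The hard part, and the main obstacle.} It remains to bound the number of boundary graphs $H\in\mathcal{B}_{\bar S}$ with $\size{H\oplus\gtmp}=t$, which are typically the bulk of $\mathcal{B}$. For such $H$ we must first free a unit of budget by deleting an edge $f\in H\setminus\gtmp$ (moving toward $\gtmp$) and only then add an edge $e\in S$, mapping $H\mapsto H-f+e$; the result is in $\gallowed$, at Hamming distance exactly $t$ from $\gtmp$, and meets $S$, hence lies in $\mathcal{B}_S$. Making this precise requires (i) exhibiting such an $f$ whose removal keeps $H$ in $\gallowed$ --- via a spanning tree of $H$ one finds a non-bridge, and when every new edge of $H$ is a bridge a short rerouting through an edge of $\gtmp\setminus H$ reduces $H$ to a strictly-inside graph, while the degenerate graphs with $H\subseteq\gtmp$ form a negligible set bounded directly; and (ii) choosing $f$ canonically (e.g. lexicographically, relative to a canonical spanning tree) so that the map has small multiplicity, i.e.\ each $H'\in\mathcal{B}_S$ has few preimages. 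The conditions $t\le n/16$ and $t\size{S}\le n^2/2$ are exactly the slack that makes both items go through: $t\ll n$ guarantees room to delete a new edge while staying in $\gallowed$, and $t\size{S}\le n^2/2$ converts the multiplicity bound into the stated probability. I expect item (ii) --- controlling the multiplicity of this delete-then-add map, i.e.\ proving $\size{\{H\in\mathcal{B}_{\bar S}:\size{H\oplus\gtmp}=t\}}\le O(n^2/(t\size{S}))\cdot\size{\mathcal{B}_S}$ --- to be the technical heart, together with the careful treatment of the few exceptional graphs with $H\subseteq\gtmp$; the remaining manipulations are routine once a good canonical choice of the deleted edge is fixed. (A second-moment argument on $\size{S\cap\gnew}$ gives a cleaner-looking route, but the conditioning on $\gallowed$ makes the needed correlation estimates delicate, so I would pursue the switching argument above.)
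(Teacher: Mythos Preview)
The paper does not prove this lemma; it is quoted as Lemma~4.1 of~\cite{DFGN18} and used as a black box, so there is no in-paper argument to compare your attempt against.

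On the proposal itself: the switching idea is natural, but step~(ii) hides a genuine quantitative gap. With one canonical $f$ per $H$ and $\size{S}$ choices of $e$, each boundary graph contributes only $\size{S}$ forward moves, so your target inequality $\size{\{H\in\mathcal{B}_{\bar S}:\size{H\oplus\gtmp}=t\}}\le O(n^2/(t\size{S}))\cdot\size{\mathcal{B}_S}$ would need every $H'\in\mathcal{B}_S$ to have at most $O(n^2/t)$ preimages. But a preimage $H=H'-e+f$ being $S$-free already forces $S\cap H'=\{e\}$, so $e$ is determined and the preimage count is simply the number of admissible $f$ (a non-edge of $\gtmp$, outside $H'\cup S$, satisfying the canonical fixed-point condition). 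Take $\gtmp$ to be a Hamiltonian cycle with $t-1$ extra chords and $\size{S}=1$, and let $H'$ be the bare cycle plus the single $S$-edge. Then $\size{H'\oplus\gtmp}=t$, $H'\setminus\gtmp=\{e\}$, the set $(H'\setminus\gtmp)\setminus\{e\}$ is empty so the canonical constraint is vacuous, and every non-edge $f$ of $\gtmp$ outside $S$ gives a valid preimage $H=H'-e+f$ (it is connected, at distance exactly $t$, $S$-free, and has $H\setminus\gtmp=\{f\}$ so $f$ is trivially canonical). There are $\Theta(n^2)$ such $f$, so the multiplicity is $\Theta(n^2)$, not $O(n^2/t)$, and the bound you obtain is off by a factor of~$t$. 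That missing factor would have to come from using \emph{all} flips $f\in H\oplus\gtmp$ as forward moves rather than a single canonical one, together with a proof that $\Omega(t)$ of them keep the graph in $\gallowed$ --- itself nontrivial when $H$ is close to a tree --- and you have not carried this out. For reference, the argument in~\cite{DFGN18} bypasses switching: it fixes a spanning tree $T$ of $\gtmp$, shows that a uniform graph in the \emph{unconstrained} Hamming-$t$ ball leaves $T$ intact (hence is connected) with probability at least $1/2$ when $t\le n/16$, and then the probability of hitting $S$ becomes an elementary count on unconstrained flips.
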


% \begin{lemma}%[Lemma 4.2 in~\cite{DFGN18}]
% \label{lem: UB removing from a set of edges}
%     There exists a constant $c_2>0$ such that the following holds.

%      Let $\gtmp \in \gallowed$ be a graph, and $\emptyset \neq S \subseteq \etmp$ a set of potential edges. Let $t\in\bbN$ be an integer such that $t\leq n/16$.

%      If $\gnew$ is a $t$-smoothing of $\gtmp$, then the probability that $\gnew$ is missing at least one edge from $S$ (i.e., $S \setminus \enew \neq \emptyset$) is at most  $c_2 \cdot t \size{S} / n^2$.
% \end{lemma}

\begin{lemma}%[Lemma 4.3 in~\cite{DFGN18}]
\label{lem: UB adding from a set of edges}
    There exists a constant $c_2>0$ such that the following holds.

    Let $\gtmp \in \gallowed$ be a graph. Let $S \subseteq \etmp$ be a set of potential edges  such that $S \bigcap \etmp = \emptyset$. Let $t\in\bbN$, such that $t\leq n/16$.

     If $\gnew$ is an $t$-smoothing of $\gtmp$, then the probability that $\gnew$ contains at least one edge from $S$ is at most  $c_2 \cdot t \size{S} / n^2$.
\end{lemma}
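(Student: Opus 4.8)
The plan is to fix a single potential edge $e\in S$ (so $e\notin\etmp$), bound the probability that $\gnew$ contains $e$ by $O(t/n^2)$, and then sum over $S$ by a union bound. By definition the $t$-smoothing $\gnew$ is uniform over the family $\mathcal A$ of graphs of $\gallowed$ at Hamming distance at most $t$ from $\gtmp$, so this per-edge probability is exactly $|\mathcal A_e|/|\mathcal A|$, where $\mathcal A_e\subseteq\mathcal A$ collects the graphs that contain $e$. Thus the whole argument splits into an upper bound on the numerator and a lower bound on the denominator; throughout write $m=\binom n2$ for the number of potential edges.

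For the numerator, I would observe that each $H\in\mathcal A_e$ is determined by the toggle set $F=\etmp\oplus E(H)$, which satisfies $e\in F$ and $|F|\le t$, so $|\mathcal A_e|$ is at most the number of such sets, $\sum_{j=0}^{t-1}\binom{m-1}{j}$. Then, using the identity $\binom{m-1}{j}=\tfrac{j+1}{m}\binom{m}{j+1}$, this sum rewrites as $\tfrac1m\sum_{i=1}^{t}i\binom mi\le\tfrac tm\sum_{j=0}^{t}\binom mj$, giving $|\mathcal A_e|\le\tfrac tm\sum_{j=0}^{t}\binom mj$. This step deliberately discards the constraint of lying in $\gallowed$, which is legitimate for an upper bound.

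The crux --- and the step I expect to be the main obstacle --- is the lower bound $|\mathcal A|\ge\tfrac12\sum_{j=0}^{t}\binom mj$, i.e.\ that a constant fraction of the graphs near $\gtmp$ are actually allowed; this is precisely where the hypothesis $t\le n/16$ is used. For $\gallowed$ the family of connected graphs (the only case needed in the paper), I would fix a spanning tree $T\subseteq\etmp$ with $n-1$ edges and note that any graph obtained from $\gtmp$ by toggling at most $t$ positions, none of them in $T$, still contains $T$, and is therefore connected and in $\mathcal A$; there are $\sum_{j=0}^{t}\binom{m-(n-1)}{j}$ such graphs, all distinct. Since $m\ge n^2/4$ and $t\le n/16$, one checks that $\binom{m-(n-1)}{j}/\binom mj=\prod_{i=0}^{j-1}\bigl(1-\tfrac{n-1}{m-i}\bigr)\ge 1-\tfrac{t(n-1)}{m-t}\ge\tfrac12$ for every $j\le t$, and summing over $j$ yields the claimed bound on $|\mathcal A|$.

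Finally I would combine the two estimates: $\Pr[\gnew\text{ contains }e]=|\mathcal A_e|/|\mathcal A|\le\tfrac{(t/m)\sum_j\binom mj}{(1/2)\sum_j\binom mj}=2t/m\le 8t/n^2$, and a union bound over the at most $|S|$ edges of $S$ gives probability at most $8t|S|/n^2$, so $c_2=8$ works. The only substantive point is the density estimate in the denominator; the argument uses only that $\gallowed$ is closed under edge additions and that membership of $\gtmp$ is certified by a sparse ($O(n)$-edge) subgraph, so it goes through verbatim for connectivity and, with cosmetic changes, for any such monotone family.
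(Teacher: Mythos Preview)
The paper does not supply its own proof of this lemma: it is quoted verbatim from~\cite{DFGN18} (see the remark preceding the two auxiliary lemmas), and the appendix explicitly refers back to ``the proof of Lemma~\ref{lem: UB adding from a set of edges} (in~\cite{DFGN18})'' when it needs a refinement. So there is nothing in the present paper to compare against.

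That said, your argument is correct and is exactly the standard one used in~\cite{DFGN18}: bound $|\mathcal A_e|$ from above by dropping the $\gallowed$ constraint, bound $|\mathcal A|$ from below by the spanning-tree trick (toggles that avoid a fixed spanning tree of $\gtmp$ preserve connectivity), and finish with a union bound over $S$. The paper itself alludes to precisely this spanning-tree step when it remarks, in a related computation, that for $t<n/16$ the probability of the noise missing a fixed spanning tree is at least $1/2$. Your binomial manipulations and the use of $t\le n/16$ to get the factor $1/2$ are clean; the constant $c_2=8$ you obtain is fine. One cosmetic note: the hypothesis ``$S\subseteq\etmp$'' in the displayed statement is a typo (it contradicts $S\cap\etmp=\emptyset$); you correctly read it as $S\subseteq\binom{[n]}{2}$ with $S\cap\etmp=\emptyset$.
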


\subsection{Probability basics}
In all our upper bound theorems, we show that flooding succeeds with high probability (w.h.p.), i.e., with probability at least $1-n^{-c}$ for a constant $c$ of choice.
For simplicity, we prove success probability $1-n^{-1}$, but this can be amplified by increasing the flooding time by a multiplicative constant factor.
Our lower bounds show that flooding does not succeed with probability more than $1/2$, so it definitely does not succeeds w.h.p.

We will also use the following Chernoff bound (see, e.g.~\cite{MitzenmacherU05}):
\begin{lemma}
    Suppose $X_1,\dots,X_n$ are independent Bernoulli random variables, with $\Expc{X_i} = \mu$ for every $i$.
    Then for any $0 \leq \delta \leq 1$:
    \[\Pr\left[ \sum_{i=1}^{n} X_i \leq (1-\delta)\mu n \right] \leq e^{ - \delta^2 \mu n / 2}.\]
\end{lemma}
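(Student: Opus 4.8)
The plan is to prove this by the standard exponential--moment (Chernoff--Bernstein) method applied to the lower tail. First I would fix an auxiliary parameter $s>0$ and use that $x\mapsto e^{-sx}$ is strictly decreasing to rewrite the event of interest as $\set{e^{-s\sum_i X_i}\ge e^{-s(1-\delta)\mu n}}$. Since $e^{-s\sum_i X_i}$ is a nonnegative random variable, Markov's inequality gives
\[\Pr\left[\sum_{i=1}^n X_i\le(1-\delta)\mu n\right]\le e^{s(1-\delta)\mu n}\cdot\Expc{e^{-s\sum_i X_i}}.\]

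Next I would evaluate the moment generating function. By independence of the $X_i$, $\Expc{e^{-s\sum_i X_i}}=\prod_{i=1}^n\Expc{e^{-sX_i}}$, and since each $X_i$ is Bernoulli with mean $\mu$ we have $\Expc{e^{-sX_i}}=1-\mu(1-e^{-s})$, which by the elementary inequality $1+x\le e^x$ is at most $e^{-\mu(1-e^{-s})}$. Hence $\Expc{e^{-s\sum_i X_i}}\le e^{-\mu n(1-e^{-s})}$, and substituting into the displayed bound yields, for every $s>0$,
\[\Pr\left[\sum_{i=1}^n X_i\le(1-\delta)\mu n\right]\le\exp\!\left(\mu n\bigl(s(1-\delta)-1+e^{-s}\bigr)\right).\]
Differentiating the exponent in $s$ shows it is minimized at $e^{-s}=1-\delta$, i.e.\ $s=\ln\frac1{1-\delta}>0$ when $0\le\delta<1$ (the boundary case $\delta=1$ follows by letting $s\to\infty$, or directly from $(1-\mu)^n\le e^{-\mu n}$). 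Plugging this choice back in gives the classical form $\Pr[\,\cdot\,]\le\bigl(e^{-\delta}/(1-\delta)^{1-\delta}\bigr)^{\mu n}$.

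The only step requiring genuine calculation --- and the place I would expect to be the ``main obstacle,'' although it is still entirely routine --- is the real-analysis estimate $e^{-\delta}/(1-\delta)^{1-\delta}\le e^{-\delta^2/2}$ for $0\le\delta\le1$. Taking logarithms, this is equivalent to $g(\delta):=\delta+(1-\delta)\ln(1-\delta)-\delta^2/2\ge0$ on $[0,1]$. I would verify this by computing $g(0)=0$, $g'(\delta)=-\ln(1-\delta)-\delta$ with $g'(0)=0$, and $g''(\delta)=\delta/(1-\delta)\ge0$; thus $g'$ is nondecreasing hence nonnegative, so $g$ is nondecreasing hence nonnegative on $[0,1)$, and the claim extends to $\delta=1$ by continuity. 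Combining this with the previous display gives $\Pr\left[\sum_{i=1}^n X_i\le(1-\delta)\mu n\right]\le e^{-\delta^2\mu n/2}$, which is the stated bound. Throughout, the only care needed is in tracking the direction of the inequalities (the lower tail forces $s>0$ in the exponential) and in the monotonicity argument for $g$.
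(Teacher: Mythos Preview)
Your proof is correct and is the standard exponential-moment derivation of the lower-tail Chernoff bound. The paper, however, does not prove this lemma at all: it simply quotes it as a known fact with a reference to the Mitzenmacher--Upfal textbook. So there is nothing to compare on the level of argument --- you have supplied a full (and clean) proof where the paper only cites one, and your derivation is essentially the same one that appears in the cited reference.
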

We usually apply this bound with $\delta = 0.9$.

\hide{ %Maybe no need
We add another lemma, to show that sometimes that chance of not hitting a set $S_1$, but hitting a different set $S_2$ can relate to the sizes of said sizes.

\begin{lemma}[Lemma 4.3 in~\cite{DFGN18}]
\label{lem: UB adding from a set of edges}
    There exists a constant $c_4>0$ such that the following holds.
    Let $\gold,\gtmp$ be two graphs on the same set $[n]=\{1,\ldots,n\}$ of vertices and $\emptyset\neq S\subseteq \binom{[n]}{2}$ a set of potential edges such that $S \bigcap \etmp = \emptyset$.
    Let  $0<\epsilon<1$ and $k$ be parameters such that for $t=\roundp(\max \set{k, \epsilon \cdot \size{\gold-\gtmp}})$ we have $t\leq n/16$.

    If $\gnew$ is an $(\eps,k)$-smoothing of $\gtmp$ w.r.t.~$\gold$, then the probability that $\gnew$ containing at least one edge from $S_2$, and no edges from $S_1$ is at most:  $c_2 t \size{S} / n^2$.
\end{lemma}

\begin{proof}
    We denote by $E_2$ the event that $\gnew$ contains at least one edge of $S_2$ and by $E_1$ the event of $\gnew$ having no edges from $S_1$. We also focus on a specific spanning tree $T$ of size $n-1$ edges, and denote by $E_T$ the probability of the noise not hitting $T$ is at least $1/2$. For $t < n/16$ noise, we know that $\Pr[E_T] \geq 1/2$ (see original proof of Lemma~\ref{lem: LB hitting a set of edges}), and therefore
    $$\Pr[E_1 \wedge E_2 | E_T] \leq \frac{\Pr[E_1 \wedge E_2 \wedge E_t]}{\Pr[E_T]} \leq 2\Pr[E_1 \wedge E_2]$$

\end{proof}
}

\section{Flooding in Dynamic Networks}
\label{sec: background noise}
\iffalse
\Gcomment{
Structure of the results
\begin{itemize}
    \item Change $(\epsilon, k)$ definition to handle fractional noise.

    \item Consider noise between 0 and 1. We aim to close the curious gap which they have between $k = 1$ and $k=0$. We show that their bounds continue smoothly in this interval as well. For fractional noise $\alpha \in (0,1)$, we show a $\Theta(n^{2/3}/\alpha^{1/3})$ bound.

    \item Adaptivity. Their bounds assume an oblivious adversary (give formal definition). We consider the case of an adaptive adversary and show this indeed makes the problem harder. We give a bound of $\Theta(n / \sqrt{k})$.

    \item Responsive noise. Here we consider what happens when the noise depends on the actions of the adversary. We model this in two ways: responsive and targeted responsive (think of better names). We ignore background noise, and actually allow no noise when there are no changes. Surprisingly, this leads to higher lower bounds for the problem in the non targeted case. We get a bound of $\Theta(n^{2/3}D^{1/3}/\epsilon^{1/3})$. While targeted noise greatly weakens the opponent. Here we get an upper bound of $O(D \log n / \epsilon^{D^2})$. And when $D=\log n$ we get a lower bound of $\Omega(n)$.
\end{itemize}
}
\fi

\hide{ %old
\begin{theorem}
For an $(\epsilon,k)$-smoothed dynamic graph, flooding takes $\Theta(n^{2/3} / k^{1/3})$ rounds, when $k\geq 1$.
\end{theorem}
\begin{proof}
The upper bound from~\cite{DFGN18} holds, as in each round we have noise of magnitude \emph{at least} $k$. Upper bound from Seth's paper still holds. Lower bound: take spooling graph, but instead of connecting center to center, connect center to a vertex on the edge of the star. Now the number of changes is constant. Also we clean any edges between the two starts incurred due to noise. This requires more actions (due to multiplicative noise) but this sum converges and is bounded by $3k$. The rest of the proof should be similar to the original.
\end{proof}
}

\subsection{Fractional amount of noise}
We address an interesting phenomenon which occurs in~\cite{DFGN18}: by merely adding a single noisy edge per round, the flooding process dramatically speeds up from $\Theta(n)$ in the worst case, to only $\Theta(n^{2/3})$ in the worst case.
To explain this gap, we present a natural notion of \emph{fractional} noise: if the amount of noise $k$ is not an integer, we use an integer close to $k$, using the function~$\roundp(k)$.
%$\floor{k}$ or $\ceil{k}$ amount of noise. We choose which one according to the fractional part $frac(k)$.
An easy assertion of the result in \cite{DFGN18}, is that whenever $k > 1$, the analysis does not change by much, and the same asymptotic bound holds. This leaves open the big gap between $k=0$ (no noise) and $k=1$. We address this question in the next two theorems, showing a smooth transition in the worst-case flooding time between these two values of $k$.

Next, we revise the upper bound in~\cite{DFGN18} to handling fractional values of $k$ and values smaller than $1$. In addition, we show a somewhat cleaner argument, that carries three additional advantages: (1) It can be easily extended to adaptive adversaries; (2) It extends well to other models, as seen in subsequent sections; (3) It decreases the multiplicative logarithmic factor, dropping the asymptotic gap between upper and lower bound to only $O(\log^{1/3}(n))$.
Hence, the proof of the next theorem serves as a prototype for later proofs, of upper bounds for adaptive adversaries and proportional noise.

\begin{theorem}
    \label{thm: fractional UB}
    Fix a real number $0 < k \leq n/16$. For any $k$-smoothed dynamic graph, flooding takes $O\left(\min\set{n, n^{2/3} (\log n / k)^{1/3}}\right)$ rounds, w.h.p.
\end{theorem}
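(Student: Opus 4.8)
The plan is to prove the two terms of the minimum separately. The $O(n)$ bound is immediate and deterministic: each $G'_i$ lies in $\gallowed$ and is therefore connected, so as long as not all vertices hold the message the informed set has a nonempty edge boundary and gains at least one new vertex; hence flooding finishes within $n$ rounds. From now on fix $T := C\,n^{2/3}(\log n / k)^{1/3}$ for a large absolute constant $C$ and assume $T < n$ (otherwise the first bound already applies); a short computation shows that in this regime $k > \log n / n$, hence $T = \omega(\log n)$ and $k T^2 < n^2$.

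Fix a target vertex $u$; at the end we union-bound over all $n$ choices. I track two families of vertex sets: the forward informed sets $S_0 = \{\text{source}\}$, $S_i = S_{i-1} \cup \{\text{neighbors of } S_{i-1} \text{ in } G'_i\}$, which are exactly the vertices holding the message after round $i$; and the backward sets $R_T = \{u\}$, $R_{i-1} = R_i \cup \{\text{neighbors of } R_i \text{ in } G'_i\}$, so that $R_i$ is the set of vertices which, if they held the message after round $i$, would deliver it to $u$ by round $T$. Connectivity of every $G'_i$ gives the deterministic bounds $|S_i| \ge \min(i+1, n)$ and $|R_i| \ge \min(T-i+1, n)$. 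The single structural fact needed is one-directional and involves only $R_{T/2}$: if $S_0 \cap R_{T/2} \ne \emptyset$, or if for some round $s \le T/2$ the graph $G'_s$ contains an edge with one endpoint in $S_{s-1}$ and the other in $R_{T/2}$, then $u$ is informed by round $T$ --- the message reaches $R_{T/2}$ by time $s$, still lies there at time $T/2$, and then gets to $u$ by the definition of $R_{T/2}$.

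The probabilistic heart of the argument is to reveal the per-round noise in a carefully chosen order: first all of rounds $T/2+1, \dots, T$, which (because the adversary is non-adaptive) fixes the set $R := R_{T/2}$, with $|R| \ge T/2$; then rounds $1, 2, \dots, T/2$ one at a time. Conditioned on no connecting edge having appeared in rounds $1,\dots,s-1$, the sets $S_{s-1}$ and $R$ are determined and disjoint with $|S_{s-1}| \ge s$, so at least $sT/2$ potential edges run between them. Applying Lemma~\ref{lem: LB hitting a set of edges} to a $\Theta(sT)$-sized subset of those edges (its size hypotheses are met in this regime after restricting to such a subset) and averaging over the random integer $\roundp(k)$ --- whose mean is $k$ and on which the guarantee of the lemma depends linearly --- the round-$s$ noise produces a connecting edge with probability $\Omega(k s T / n^2)$, so the conditional probability of surviving round $s$ is at most $\exp(-\Omega(k s T / n^2))$. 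By the chain rule and $\sum_{s=1}^{T/2} s = \Theta(T^2)$,
\[
\Pr[u \text{ not informed by round } T] \;\le\; \prod_{s=1}^{T/2} \exp\!\bigl(-\Omega(k s T / n^2)\bigr) \;=\; \exp\!\bigl(-\Omega(k T^3 / n^2)\bigr),
\]
and with $T = C n^{2/3}(\log n/k)^{1/3}$ the exponent equals $-\Omega(C^3 \log n) \le -2\ln n$ once $C$ is a large enough absolute constant. A union bound over the $n$ choices of $u$ completes the proof; note that the only slack against the $\Omega(n^{2/3}/k^{1/3})$ lower bound is the $\log^{1/3} n$ factor.

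The step I expect to be the main obstacle is the dependency bookkeeping just sketched. Both $S_{s-1}$ and the backward set $R_{T/2}$ are themselves random --- each depends on the noise of rounds other than $s$ --- so a product bound over a single left-to-right revelation of the noise would not be legitimate. Splitting time in half and revealing the ``late'' rounds before the ``early'' ones is exactly what freezes $R_{T/2}$ before rounds $1,\dots,T/2$ are processed, and this is precisely where non-adaptivity is used: for an adaptive adversary the graphs $G_{T/2+1},\dots,G_T$ would depend on the noise of rounds $1,\dots,T/2$, so $R_{T/2}$ could not be fixed in advance --- which is why the adaptive model must be handled separately, and with a weaker bound. The remaining issues are routine: treating $\roundp(k)$ as a random integer (possibly $0$ when $k < 1$) by passing to expectations, and checking the size hypotheses of Lemma~\ref{lem: LB hitting a set of edges} on the chosen $\Theta(sT)$ potential edges.
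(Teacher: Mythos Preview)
Your proof is correct and follows essentially the same plan as the paper: grow a forward informed set, fix a backward reachable set by revealing the late rounds first (this is exactly where obliviousness is used), and apply Lemma~\ref{lem: LB hitting a set of edges} to the product of the two sets to obtain a failure probability of $\exp(-\Omega(kT^3/n^2))$. The only noteworthy deviation is your unified handling of fractional $k$ by averaging the lemma's linear-in-$t$ guarantee over $\roundp(k)$; the paper instead splits into the cases $k\geq 1$ and $0<k<1$, using $\lfloor k\rfloor\geq k/2$ in the former and a Chernoff bound (to secure $\Omega(kr)$ noisy rounds) in the latter --- your shortcut is cleaner but lands on the same bound.
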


Note that an $n$-round upper bound is trivial (using the connectivity guarantee), and that whenever $k \leq c \cdot \log n / n$ for some constant $c$, the term $n$ is the smaller one, and we need not prove anything more.%
\footnote{This is rather intuitive, as if $k$ is very small (e.g., $k \leq c/n$), when considering $\delta n$ rounds for small enough $\delta$, no noise occurs. It is known that when no noise occurs, a linear amount of rounds is required for flooding. This intuition is formalized in the lower bound later in this section.}
We therefore focus on the event of $k > c \cdot \log n / n$, and for this case we show an upper bound of $O(n^{2/3}(\log n / k)^{1/3})$.

\begin{proof}
%    We use an argument similar to the one in~\cite{DFGN18}, with a few modifications:
    Fix $u$ to be the starting vertex, and $v$ to be some other vertex. We show that within $R = 3r$ rounds, the message has been flooded to $v$ with constant probability (over the random noise). We split the $3r$ rounds into three phases as follows.
    \begin{enumerate}
        \item First $r$ rounds, for spreading the message to a large set of vertices we call $U$.
        \item Next $r$ rounds, for transferring the message from the set $U$ to another large set $V$ to be defined next.
        \item Last $r$ rounds, for ``converging'' and passing the message from the potential set $V$ to the specific vertex $v$.
    \end{enumerate}
    We now quantify the message flooding in the three phases.

    \subparagraph{After the first $r$ rounds, the message has reached many vertices.}
    Let $U_i$ denote the set of vertices that have been flooded at time $i$. Using merely the fact that the graph is connected at all times, and assuming the flooding process is not over, we know that at each round a new vertex receives the message, implying $\size{U_{i+1}} \geq \size{U_i} + 1$.
    Hence, $\size{U_r} \geq r$ (with probability~$1$).

    \subparagraph{In the last $r$ rounds, many vertices can potentially flood the message to $v$.}
    We denote by $V_i$ the set of vertices from which $v$ can receive the message within the last $i$ rounds ($R-i+1, \dots, R$).
    Formally, fix the sequence of graphs chosen by the oblivious adversary,
    and apply the random noise on these graphs to attain the last $r$ graphs of our \emph{smoothed} dynamic graph.
    Define $V_0=\set{v}$, and for $i>0$, define $V_i$ as the union of $V_{i-1}$ with the neighbors of it in $G_{R-i+1}$.
    That is, $V_i$ is defined analogously to $U_i$ but with the opposite order of graphs.
    We point out that we are dealing with a stochastic process: each $V_i$ is a random variable that depends on the noise in the last $i$ rounds. Still, the connectivity guarantees that $\size{V_{i+1}} \geq \size{V_{i}} + 1$. Therefore, we have $\size{V_{r}} \geq r$ (with probability $1$).%
    \footnote{Note that here we strongly rely on the obliviousness of the adversary: with an adaptive adversary, one cannot define $V_r$ properly before the end of the execution of the algorithm, as the adversary's choices were not yet made. The case of an adaptive adversary is discussed in the next section.}

    \subparagraph{The middle rounds.}
    The above processes define two randomly chosen sets, $U_{r}$ and $V_{r}$, each of size at least $r$. If $U_{r} \cap V_{r} \neq \emptyset$, then we are done as $v$ is guaranteed to receive the message even if we ignore the middle rounds.
    Otherwise, we consider the set $S = U_{r} \times V_{r}$ of potential edges, and show that one of them belongs to our smoothed dynamic graph in at least one of the middle graphs, with probability at least $1 - n^{-2}$, for the right value of $r$.

    %Here the proof deviates from the one in~\cite{DFGN18}:
    Let us consider separately the two cases: $k \geq 1$ (note that non-integer $k$ was not handled before), and the case $0 < k <1$.
    \subparagraph{The case of $k \geq 1$.}
    In this case, we essentially claim the following: we are guaranteed to have either $\floor{k}$ noise or $\ceil{k}$ at each and every round. Applying Lemma~\ref{lem: LB hitting a set of edges} for each such round, the probability of not adding any edge from $S$ is at most
    \[\left(1 - c_1 \floor{k} \size{S} / n^2 \right) \leq \left(1 - c_1 k r^2 / 2n^2 \right),\]
    where the inequality follows from $\floor{k} \geq k/2$ and $\size{S} = \size{U_{r}} \cdot \size{V_{r}} \geq r^2$.
    Thus, the probability of not adding any edge from $S$ in any of these $r$ noisy rounds is at most
    \[\left(1 - c_1 k r^2 / 2n^2 \right)^{r} \leq e^{-c_1 r^3 k /(2n^2)},\]
    which is upper bounded by $n^{-2}$, whenever $r \geq n^{2/3} \cdot \left[(4 \log n)/(c_1 k)\right]^{1/3}$.

    \subparagraph{The case of $0 < k < 1$.}
    Note that here we can no longer use $\floor{k} \geq k/2$, and so we turn to a somewhat more global analysis which guarantees that ``enough'' rounds produce a (single) noisy edge:
    at each round we essentially add a noisy edge with probability $k < 1$, and otherwise do nothing. Since we have $r$ rounds, and in each of them noise occurs  independently, we can use a standard Chernoff bound to say that with all but probability at most $e^{-0.4kr}$, we have $(k/10)r$ rounds in which a noisy edge was added.\footnote{The expectation over all rounds is $kr$ noisy edges.}
    %We apply a multiplicative version with $\delta = 0.9$.}
    We later bound this term. For each of the $(k/10)r$ rounds we again apply Lemma~\ref{lem: LB hitting a set of edges} to claim that no edge from $S$ was added, with probability at most
    \[\left(1 - c_1 \size{S} / n^2 \right) \leq \left(1 - c_1 r^2 / n^2 \right),\]
    where the inequality follows again from $\size{S} = \size{U_{r}} \cdot \size{V_{r}} \geq r^2$.
    Thus, the probability of not adding any edge from $S$ in any of these $(k/10)r$ noisy rounds is upper bounded by
    \[\left(1 - c_1 r^2 / n^2 \right)^{(k/10) r} \leq e^{-c_1 k r^3 /(10n^2)},\]
    which is upper bounded by $1/(2n^2)$ whenever $r \geq n^{2/3} \cdot \left[(10\log n)/(c_1 k)\right]^{1/3}$.

    Recalling that we only deal with the case $k \geq c\cdot \log n / n$, choosing the constant $c$ according to the constant $c_1$, we also get $r \geq 10\log n / k$. This allows us to bound the error of the Chernoff inequality: $e^{-0.4kr} \leq e^{-3 \log n} \leq 1/(2n^2)$. Union bounding on both possible errors, we know that with probability at least $1-n^{-2}$ the vertex $v$ is informed after $3r$ rounds.

    %Old:
    %We wish to bound both this error term and the Chernoff error term each. Using the fact that $k \geq \log /n$, one can see the last term is the larger, and both are bounded $1/(2n^2)$, for $r = n^{2/3} / \left[(10\log n)/(c_1 k)\right]^{1/3}$.

    \medskip

    To conclude, we use $r = n^{2/3} \cdot \left[(10\log n)/(c_1 k)\right]^{1/3}$. For any value $k \geq c \cdot \log n / n$, and for any realization of $U_{r}$ and $V_{r}$, the message has been passed to $V_{r}$ within the $r$ middle rounds, with probability at least $1-n^{-2}$. So $v$ received the message after $R = 3r$ rounds with the same probability.
    %Old:
    %By repeating the process for $2\log n$ times, we have failure probability of $n^{-2}$ for each non-source vertex $v$.
    Taking a union bound over all the vertices implies that $R$ rounds are enough to flood to the whole network with probability at least $1-1/n$.
\end{proof}

We also show a matching lower bound, restructuring the proof of the lower bound in~\cite{DFGN18}. We note that their proof actually states a lower bound of $\Omega\left(min\set{n/k, n^{2/3}/k^{1/3}}\right)$ that apply to any $k < n/16$ (and is indeed tight for $k < O(\sqrt{n})$). We restructure the analysis of the lower bound to fit the different constructions given in the following sections, as follows.

First, we consider the first $R$ rounds, for some parameter $R$, and show inductively that the following two main claims continue to hold for every round $i < R$ with very high probability.
\begin{itemize}
    \item Some pre-defined set of vertices stays uninformed
    \item Given the above, the \emph{expected} number of informed vertices in total does not grow rapidly.
\end{itemize}
The growth (in the expected number of informed vertices) has a multiplicative-additive progression, potentially becoming an exponential growth (with a small exponential).

We choose $R$ such that the expected number of informed vertices is upper bounded by $\delta n$, and use Markov inequality to show that with high probability the flooding is not yet over after $R$ rounds. We then  apply a union bound over all the inductive steps, where each has a small probability to fail the inductive argument. Altogether, we show that with high probability, the flooding is not over after $R$ rounds.

In Appendix~\ref{Sec: appendix thm proof}, we use an argument along those lines in order to prove the following extensions of the lower bound from~\cite{DFGN18} to non-integer values and to fractional values (where $k < 1$).

\begin{theorem}
    \label{thm: fractional LB 1}
    Fix $1 \leq k \leq n/16$ (not necessarily an integer). For any $k$-smoothed dynamic graph, for the flooding process to succeed with probability at least $1/2$, it must run for $\Omega\left(\min\set{n/k, n^{2/3}/k^{1/3}}\right)$ rounds.
\end{theorem}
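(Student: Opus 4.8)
The plan is to build an explicit oblivious adversarial dynamic graph $\Hcal=(G_1,G_2,\dots)$ --- a variant of the ``spooling'' lower-bound construction of Dinitz \etal\ --- and to show that on it flooding fails to complete within $R$ rounds with probability more than $1/2$, where $R=c\cdot\min\set{n/k,\,n^{2/3}/k^{1/3}}$ for a suitably small constant $c$. The construction has three features: every $G_i$ is connected (so it is an allowed graph for flooding); in a round with no noise at most one new vertex becomes informed (so ``natural'' flooding is slow); and the still-uninformed vertices are arranged, in a way fixed in advance, so that a single random shortcut edge reaching into the uninformed region can inform only few additional vertices over the remaining rounds, with the ``reach'' of such a shortcut kept narrow enough that the expected per-round progress stays controlled. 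Since $k\geq 1$, in every round $i$ the realized noise is $t_i=\roundp(k)\in[k/2,2k]$, so Lemmas~\ref{lem: LB hitting a set of edges} and~\ref{lem: UB adding from a set of edges} apply with parameter $\Theta(k)$ in every round; the only change from~\cite{DFGN18} is that we invoke them round by round with a possibly different parameter each time, which is exactly why a more global, inductive analysis is needed.

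The core is an induction over the rounds $i\leq R$. Let $X_i$ be the number of informed vertices after round $i$ of the smoothed graph, and let $A_i$ be the event that through round $i$ a designated ``protected'' set of vertices has stayed uninformed and the structural invariant of the construction has been maintained. We prove simultaneously that $\Pr[\neg A_i]\leq i/n^{3}$ and that $\Expc{X_i\cdot\ind{A_i}}\leq f(i)$, where $f$ satisfies a multiplicative--additive recurrence $f(i+1)\leq(1+a)f(i)+b$ with $f(0)=1$, $b=\Theta(1)$, and $1/a=\Theta(\min\set{n/k,\,n^{2/3}/k^{1/3}})$. For the inductive step, condition on $A_i$ and on the configuration after round $i$. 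First, the set of potential edges joining the informed block to the protected set consists of non-edges of $G_{i+1}$, so Lemma~\ref{lem: UB adding from a set of edges} together with a union bound over the at most $\ceil{k}$ noisy edges bound $\Pr[\neg A_{i+1}\mid A_i]$ by $1/n^{3}$ once the protected set is chosen small enough relative to $R$; combined with the inductive hypothesis this gives the claim for $\Pr[\neg A_{i+1}]$. Second, for the expectation: the deterministic part of round $i+1$ informs at most one new vertex with probability $1$, and applying Lemma~\ref{lem: UB adding from a set of edges} to the (non-edge) set of ``useful'' shortcut edges, together with the construction's bound on how many vertices each such edge eventually informs, shows that the expected number of vertices informed through noise in round $i+1$, given the configuration, is at most $a\cdot X_i$ on the event $A_i$; pushing the expectation through against the $\mathcal F_i$-measurable indicator $\ind{A_i}$ (possibly using a deterministic cap on $X_i$ until round $R$ to keep the bookkeeping linear) yields $\Expc{X_{i+1}\ind{A_{i+1}}}\leq(1+a)\,\Expc{X_i\ind{A_i}}+b$.

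It remains to solve the recurrence and conclude. Unrolling, $f(R)\le(1+a)^{R}+b\bigl((1+a)^{R}-1\bigr)/a\le e^{aR}\bigl(1+bR\bigr)$; since $R=c/a$ for small $c$ we have $aR\le 1$, hence $f(R)=O(bR)=O(R)\le\delta n$ once $c$ is small enough. It is here that the minimum appears: when $k\le\sqrt n$ the binding term is $a=\Theta(k^{1/3}/n^{2/3})$, giving $R=\Theta(n^{2/3}/k^{1/3})$, while when $k\ge\sqrt n$ it is $a=\Theta(k/n)$, giving $R=\Theta(n/k)$; in each regime the other expression is the larger one, so $\min\set{n/k,\,n^{2/3}/k^{1/3}}$ is the right bound. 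By Markov's inequality, $\Pr[X_R=n\ \wedge\ A_R]\leq\Pr[X_R\ge n\mid A_R]\,\Pr[A_R]\le\Expc{X_R\ind{A_R}}/n\le\delta$; taking $\delta<1/4$ and adding $\Pr[\neg A_R]\le R/n^3<1/4$ we get that flooding is complete within $R$ rounds with probability at most $\Pr[X_R=n]\le 2\delta+R/n^3<1/2$. Hence flooding does not succeed with probability $\ge 1/2$ in fewer than $\Omega(\min\set{n/k,\,n^{2/3}/k^{1/3}})$ rounds.

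The main obstacle is the construction, specifically the structural invariant bounding the damage of a random shortcut edge: one must design a \emph{fixed} sequence $\Hcal$ such that, wherever the noise strikes, a shortcut into the uninformed region informs only few further vertices over the remaining rounds --- narrow enough that the expected per-round gain is only $a\cdot X_i$ with the $a$ above --- while keeping every $G_i$ connected and without letting the sequence react to where the noise landed. If instead a single shortcut could set off a long cascade, the coefficient $a$ would blow up and the bound would degrade well below $n^{2/3}/k^{1/3}$; this is where our restructured version of the Dinitz \etal\ argument does its real work, and where the choice of which vertices are ``protected'' (and how large that set can afford to be for a given $k$) matters. A secondary, more routine difficulty is the bookkeeping: the expectation recurrence must carry the indicator $\ind{A_i}$ throughout rather than conditioning naively on $A_i$, so that the ``good so far'' event does not create circular conditioning.
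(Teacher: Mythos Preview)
Your skeleton matches the paper's approach: the spooling construction, an inductive argument tracking the informed count $X_i$ together with a ``protected'' event $A_i$, a multiplicative--additive recurrence on $\Expc{X_i}$, and Markov at the end. The gap is in how you allocate the two terms of the $\min$.

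In the spooling graph the multiplicative parameter of the recurrence is \emph{always} $a=\Theta(k/n)$: a noisy edge from $I_i$ to a non-head uninformed vertex informs exactly that one vertex, so by Lemma~\ref{lem: UB adding from a set of edges} applied to each $v\notin I_i$ the expected number of newly informed vertices in round $i{+}1$ is $1+O(k\,|I_i|/n)$. Consequently the bound $R=O(n/k)$ is what keeps the recurrence additive. The term $n^{2/3}/k^{1/3}$ does \emph{not} come from the recurrence; it comes from the union bound on $A_i$. The protected set has to be the set of \emph{future heads}, of size $\Theta(R)$ (if any future head is informed early, the whole graph floods the moment it becomes head). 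Hence $\Pr[\neg A_{i+1}\mid A_{\le i}]=\Theta(k\,|I_i|\,R/n^2)=\Theta(kR^2/n^2)$ once $|I_i|=O(R)$, and the union bound over $R$ rounds is $\Theta(kR^3/n^2)$, which is $O(1)$ exactly when $R=O(n^{2/3}/k^{1/3})$. Taking the smaller of these two constraints yields the $\min$.

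Your proposal instead pushes both constraints into $a$, asserting $a=\Theta(k^{1/3}/n^{2/3})$ for $k\le\sqrt n$ and $\Pr[\neg A_{i+1}\mid A_i]\le 1/n^3$ always. Neither is realizable in the spooling graph (or in any oblivious variant you have described): to get per-round failure $1/n^3$ you would need the protected set to have size $o(1/(k|I_i|n))<1$, i.e.\ to be empty, and no fixed sequence keeps a single shortcut's ``reach'' so narrow that the per-round expected gain drops to $\Theta(k^{1/3}/n^{2/3})\cdot X_i$. The fix is exactly the paper's bookkeeping: keep $a=\Theta(k/n)$, let the protected set be the $R$ future heads, and read off $n^{2/3}/k^{1/3}$ from $kR^3/n^2=O(1)$ rather than from the recurrence.
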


In particular, whenever $k = O(\sqrt{n})$, the dominant term is $\Omega(n^{2/3}/k^{1/3})$, matching the upper bound (up to logarithmic factors).

\begin{theorem}
    \label{thm: fractional LB 2}
    Fix $0 < k < 1$. For any $k$-smoothed dynamic graph, for the flooding process to succeed with probability at least $1/2$, it must run for $\Omega\left(\min\set{n, n^{2/3}/k^{1/3}}\right)$ rounds.
\end{theorem}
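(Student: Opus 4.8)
The plan is to adapt the adversarial construction and the inductive argument behind the lower bound of Dinitz \etal\ (the same machinery used for Theorem~\ref{thm: fractional LB 1}) to the sub-unit regime $0<k<1$. The only structural novelty is that now $t_i=\roundp(k)$ equals $1$ only with probability $k$ and $0$ otherwise, so in a window of $R$ rounds roughly $kR$ rounds carry any noise at all; hence the analysis must first control \emph{how many} rounds are noisy, and then run the usual invariant over those rounds. I would split into two regimes according to which term of $\min\{n,n^{2/3}/k^{1/3}\}$ is active, noting that $n^{2/3}/k^{1/3}\ge n$ iff $k\le 1/n$.

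In the tiny-noise regime $k\le 1/n$ the target bound is $\Omega(n)$, and here the cleanest route is essentially ``no noise happens'': set $R=\delta n$ for a small constant $\delta$; since $k\le 1/n$ we get $\Pr[\text{no noisy round among the first }R]=(1-k)^R\ge (1-1/n)^{\delta n}>1/2$ for $\delta$ small enough. Conditioned on that event the smoothed dynamic graph is exactly the (connected, bounded-diameter) adversarial sequence, on which flooding provably takes $n$ rounds — one new informed vertex per round by connectivity, with equality enforced by the construction — so with probability more than $1/2$ flooding is not finished after $R<n$ rounds. This formalizes the footnote following Theorem~\ref{thm: fractional UB}.

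For the complementary regime $1/n<k<1$ I would set $R=c\cdot n^{2/3}/k^{1/3}$ for a small constant $c$ (so $R<n$) and reuse the Dinitz-\etal\ construction: it keeps a designated ``backbone'' structure linking the informed region to a far target region, and absent a ``useful'' noisy edge the set of currently informed backbone vertices and the set of vertices from which the target can be reached within the remaining rounds each have size $O(R)$, so the set $S_i$ of pairs whose addition would accelerate flooding has size only $O(R^2)$. First, by the Chernoff bound from the preliminaries, the number $m$ of noisy rounds among the first $R$ is at most $2kR$ except with probability $e^{-\Omega(kR)}\le n^{-2}$, using $kR=\Omega(\log n)$ (which holds since $k>1/n$ and $R=\Theta(n^{2/3}/k^{1/3})$). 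Conditioning on $m\le 2kR$, I would prove by induction on $i\le R$ the two-part invariant of the outline: (a) the backbone is still intact after round $i$ (so informed-on-backbone grows by at most one per round), and (b), given (a), $\mathbb{E}[I_i]\le g(i)$ for an explicit $g$, where $I_i$ is the informed count. The inductive step for a noisy round is exactly Lemma~\ref{lem: UB adding from a set of edges}: the single random edge lands in $S_i$ with probability at most $c_2|S_i|/n^2=O(R^2/n^2)$, and when it does the informed count can grow by at most $O(R)$; summing over the at most $2kR$ noisy rounds, and accounting for the mild multiplicative feedback (a noise-created front can enlarge $S_i$ and $I_i$ only slightly while the backbone is intact), the recursion telescopes to $g(R)=O\!\big(R\cdot e^{O(kR^3/n^2)}\big)=O(R)\le\delta n$, since $kR^3/n^2=O(c^3)$. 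For a non-noisy round only the deterministic ``$+1$'' applies. Finally, Markov's inequality gives $\Pr[\text{flooded after }R]\le\Pr[I_R=n]\le\mathbb{E}[I_R]/n\le\delta<1/2$, and a union bound over the $\le R$ inductive steps, the Chernoff failure and the Markov step yields that with probability more than $1/2$ flooding has not completed after $R=\Omega(n^{2/3}/k^{1/3})$ rounds.

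The main obstacle is calibrating the multiplicative--additive recursion for $g$ so that it is still $O(R)$ exactly at $R=\Theta(n^{2/3}/k^{1/3})$ and not sooner: this is where the specific geometry of the construction is essential, because one needs the useful set $S_i$ to be a product of two $O(R)$-sized sets (size $O(R^2)$) rather than ``informed $\times$ everything'' (size $O(R\cdot n)$); with the cruder bound the argument degrades to the too-weak $\Omega(\sqrt{n/k})$ or $\Omega(n/k)$ and the exponent in $e^{O(kR^3/n^2)}$ becomes uncontrolled. A secondary nuisance is the interplay between the randomness of which rounds are noisy and the inductive conditioning — which is why the Chernoff step on $m$ is performed up front and the entire invariant is carried out conditioned on $m\le 2kR$ — together with checking that $kR=\Omega(\log n)$ throughout this regime so that Chernoff bound is strong enough for the union bound.
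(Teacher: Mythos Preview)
Your skeleton is the paper's: the spooling construction, a Chernoff bound on how many of the $R$ rounds are noisy, and a union bound over the event that a future head is informed prematurely. But you miss the one simplification that makes the $k<1$ case cleaner than the $k\ge 1$ case, and in trying to carry over the full $k\ge1$ machinery you garble the roles of the two ``bad'' events.

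The simplification: since $0<k<1$, every round has either zero noise edges or exactly one. Hence, conditional on the invariant $A_{\le i}$ (no future head has been informed), at most two vertices become informed per round --- one by connectivity, at most one by the single noisy edge --- so $|I_i|\le 3i$ holds \emph{deterministically}. There is no multiplicative feedback and no recursion $g(R)=O(R\cdot e^{O(kR^3/n^2)})$ to telescope; the paper never writes one. The union bound is then immediate: on each noisy round, $\Pr[\bar A_i\mid A_{\le i-1}]\le c_2\,|I_i|\cdot R/n^2\le 3c_2 R^2/n^2$, and summing over at most $O(kR)$ noisy rounds gives $O(kR^3/n^2)$, small for $R=\Theta(n^{2/3}/k^{1/3})$.

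Your description of the critical set $S_i$ and of what happens when an edge lands in it does not match the spooling graph. There, $S_i=I_i\times\{\text{future heads}\}$, and if a noise edge lands in $S_i$ the flooding can complete the round that head becomes centre of the right spool --- this is the \emph{failure} of invariant (a), not an event that ``grows the informed count by at most $O(R)$''. You seem to fold it into the expected-growth recursion for (b), which is not sound here. Separately, your claim $kR=\Omega(\log n)$ is false near $k\approx 1/n$: there $kR=\Theta(k^{2/3}n^{2/3})=\Theta(1)$. This is not fatal, since a constant-probability Chernoff tail suffices for a lower bound that only targets success probability $<1/2$, but it is overstated.
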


\subsection{Adaptive vs. oblivious adversary}
Here we note that the results of \cite{DFGN18} are for the case of oblivious (non-adaptive) adversary.
%That is, the changes that occur at time $t$ cannot depend on the noise which occurred at previous rounds $i < t$.
We extend our results (in the more generalized, fractional noise regime) to the adaptive case, obtaining bounds that are different than the ones in \cite{DFGN18}.
Interestingly, our results show separation between adaptive and oblivious adversary for a wide range of network noise: for constant $k$, in particular, we get a separation for the adaptive case, where no constant amount of noise speeds up the flooding below $\Omega(n)$, and the oblivious case where $k = \omega(1/n)$ already speeds the flooding to $o(n)$ rounds.

\begin{theorem}
    \label{thm: fractional adaptive UB}
    Fix $0 < k \leq n/16$, not necessarily an integer. For any $k$-smoothed \emph{adaptive} dynamic graph, flooding takes $O(\min\set{n, n \sqrt{\log n / k}})$ rounds, w.h.p.
\end{theorem}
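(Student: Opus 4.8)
The key difficulty with an adaptive adversary, as pointed out in the footnote to the proof of Theorem~\ref{thm: fractional UB}, is that we can no longer define the ``backward'' reachable set $V_r$ before the execution ends: the adversary chooses $G_{R-i+1}$ only after seeing the noise in earlier rounds. So the symmetric ``grow from both ends and meet in the middle'' argument breaks. My plan is to replace it with a purely \emph{forward} argument over a single long block of rounds. Let $U_i$ be the informed set at time $i$. As before, connectivity alone guarantees $|U_{i+1}| \geq |U_i| + 1$ as long as flooding is not finished, so after $r$ rounds $|U_r| \geq r$. Now instead of trying to hit a prescribed target set, I will argue that at \emph{every} round $i$, conditioned on the history so far (including the adversary's choice of $G_{i+1}$, which is now allowed to depend on everything up to round $i$), the probability that the round-$(i+1)$ noise adds an edge leaving $U_i$ is at least roughly $c_1\, k\, |U_i| (n - |U_i|) / n^2$, by Lemma~\ref{lem: LB hitting a set of edges} applied to the set $S = U_i \times (V \setminus U_i)$ of all potential ``escaping'' edges (when $k < 1$ this becomes a probability-$k$ coin that, when it lands, adds one uniformly random noisy edge, so the bound picks up a factor $k$ exactly as in the $0<k<1$ case of the previous proof). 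Whenever such an escaping edge appears, $|U_{i+1}| \geq |U_i| + 2$: one new vertex from the still-guaranteed connectivity step, and at least one more from the noisy edge. Crucially, this lower bound on the escape probability holds \emph{regardless} of what the adaptive adversary does, precisely because $S$ is the set of \emph{all} edges crossing the cut, and the noise is uniform over Hamming-ball graphs.

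\textbf{Carrying it out.} I would then set up a potential/supermartingale argument tracking $|U_i|$. Split the execution into two halves by the threshold $n/2$. While $|U_i| \leq n/2$ we have $|U_i|(n-|U_i|) \geq |U_i| n / 2$, so each round independently contributes an expected extra gain of at least $\approx c_1 k |U_i| / (2n)$ on top of the guaranteed $+1$; summing, the number of rounds to reach $n/2$ is dominated by a sum $\sum_j 1/(1 + c_1 k j / (2n))$ over the ``levels'' $j$, which after the analysis yields $O(n/\sqrt{k} \cdot \sqrt{\log n})$ once we account for the concentration loss — the $+1$ guaranteed per round already gives the trivial $n$ bound, and the noisy term is what buys the $\sqrt{\log n / k}$ factor; more carefully, over any window of $\Theta(n/\sqrt{k})$ consecutive rounds while $|U_i|$ is around a given value $m \in [\Theta(n/\sqrt k), n/2]$, the expected number of escaping-edge hits is $\Omega(\log n)$, so w.h.p.\ at least one occurs, forcing $|U_i|$ past $m$; iterating over a geometric (or arithmetic) cover of the range $[\Theta(n/\sqrt k), n/2]$ and union-bounding gives that $n/2$ is reached within $O(n\sqrt{\log n / k})$ rounds w.h.p. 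For the range $|U_i| < \Theta(n/\sqrt k)$ and the symmetric range $|U_i| > n/2$ (where $n - |U_i|$ is small), the guaranteed $+1$ per round already finishes those parts in $O(n/\sqrt k)$ and $O(n)$ rounds respectively — but note the second of these is exactly the regime the symmetric argument used to handle for free, and here we must instead observe that by relabeling we can run the same escaping-cut argument on the \emph{uninformed} set; since $|V \setminus U_i|$ shrinks, and by Lemma~\ref{lem: LB hitting a set of edges} the relevant probability is again $\Omega(k |U_i||V\setminus U_i|/n^2) = \Omega(k|V\setminus U_i|/n)$ when $|U_i| \geq n/2$, the second half costs the same $O(n\sqrt{\log n/k})$. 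Finally the $\min$ with $n$ comes from the trivial connectivity bound, relevant when $k$ is tiny (below $\log n / n$, matching the remark after Theorem~\ref{thm: fractional UB}).

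\textbf{Main obstacle.} The real subtlety — and the place where adaptivity bites — is making the per-round conditioning airtight: I need the escape probability lower bound $\Omega(k|U_i|(n-|U_i|)/n^2)$ to hold conditioned on the entire history \emph{and} the adversary's round-$(i+1)$ choice, so that the $r$ indicator variables ``round $i$ produced an escaping edge'' stochastically dominate independent Bernoullis with the stated parameters and Chernoff applies. This is exactly where Lemma~\ref{lem: LB hitting a set of edges} is strong enough: its bound depends only on $|S|$ and $t$, not on the structure of $\gtmp$, so no matter which $\gtmp = G_{i+1}$ the adversary reveals, as long as $t = \roundp(k)$ and $S = U_i \times (V\setminus U_i)$ satisfies the mild size hypotheses $t \leq n/16$ and $t|S| \leq n^2/2$, we get the bound; the hypotheses are met since $t \leq k \leq n/16$ and $|S| \leq n^2/4$. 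The second obstacle is bookkeeping the concentration: because the increments are not i.i.d.\ (the success probability drifts with $|U_i|$), I would phrase it as a stopping-time argument — fix a level $m$, let $\tau_m$ be the first time $|U_i| \geq m$, and show $\Pr[\tau_{2m} - \tau_m > C (n/\sqrt k)\log? \,] $ is small via Chernoff on the conditionally-dominated Bernoullis active during $[\tau_m, \tau_{2m})$, each with parameter $\Omega(k m / n^2 \cdot m) = \Omega(k m^2/n^2)$, needing $\Omega((n/m)^2 \cdot \log n / k)$ such rounds; summing $\sqrt{(n/m)^2 \log n / k}$-type costs over $m = n/2, n/4, \dots$ telescopes to $O(n\sqrt{\log n/k})$. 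Everything else is routine; the one genuinely new idea relative to Theorem~\ref{thm: fractional UB} is abandoning the backward set and replacing ``hit a fixed target'' by ``escape the current cut,'' which is robust to adaptivity.
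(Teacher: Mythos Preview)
Your instinct to abandon the backward set $V_r$ is exactly right, and your observation that Lemma~\ref{lem: LB hitting a set of edges} is robust to adaptivity (the bound depends only on $|S|$ and $t$, not on the adversary's choice of $\gtmp$) is the crucial one. But the forward-growth argument you build on it does not deliver the claimed bound.

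The core problem is what a single ``escaping edge'' accomplishes. Lemma~\ref{lem: LB hitting a set of edges} lower-bounds the probability that \emph{at least one} edge of $S$ appears; it says nothing about the expected \emph{number} of such edges. So each round in which your Bernoulli succeeds contributes exactly one extra informed vertex on top of the $+1$ from connectivity. Hence $|U_{i+1}|\le |U_i|+2$ in the best case, and your process reaches $n$ only after $\Omega(n)$ rounds --- the trivial bound. Your level argument does not escape this: going from $m$ to $2m$ requires $m$ new informed vertices, and even if every round produced an escaping edge you would still need $m/2$ rounds per level, summing to $\Theta(n)$. The statement ``at least one occurs, forcing $|U_i|$ past $m$'' is a symptom of the confusion: connectivity alone forces $|U_i|$ past $m$ in one round. (There is also a slip in the cut size --- $|U_i\times(V\setminus U_i)|\approx m n$, not $m^2$ --- and the appearance of a square root in your per-level cost has no visible source.)

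The paper takes a much simpler route. It keeps the first phase (connectivity grows $U_r$ to size $\ge r$), but drops the attempt to grow $U$ further via noise. Instead, for each target $v$ separately, it spends $r$ additional rounds trying to add a single noisy edge in the set $S=U_r\times\{v\}$, which has $|S|\ge r$. By Lemma~\ref{lem: LB hitting a set of edges} each of these rounds misses with probability at most $1-c_1 k r/(2n^2)$, so all $r$ rounds miss with probability at most $e^{-c_1 k r^2/(2n^2)}$; choosing $r=\Theta\!\left(n\sqrt{\log n/k}\right)$ makes this $n^{-2}$, and a union bound over the $n-1$ targets finishes. The $\sqrt{\,\cdot\,}$ arises precisely from balancing the $r$ rounds of growth against the exponent $kr^2/n^2$ --- a balance your growth-only argument never sets up, because it needs $\Theta(n)$ successes rather than one per target.
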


    The proof follows an argument similar to the one of Theorem~\ref{thm: fractional UB}, where the process is broken to three phases: $u$ spreads the message to $U_r$, which carry it to $V_r$, who are sure to deliver it to the destination $v$. The major difference here is that we can no longer rely on the last phase, since an adaptive adversary \emph{sees} the informed vertices and is able to act accordingly. We thus give more focus to the second, analyzing the chance for direct noisy edge between the informed set $U_r$ and a destination vertex $v$.

\begin{proof}
    First, note that either way $n-1$ rounds are enough for flooding, as we still have a connectivity guarantee for each and every iteration. This means that for $k < \Theta(\log n)$, the theorem does not improve upon the trivial upper bound. In particular, in the following, we can therefore disregard the case of $k < 1$.

    We use a similar argument as in the proof of Theorem~\ref{thm: fractional UB}.
    We note that the previous proof relied on an oblivious adversary when we statistically analysed the middle rounds, for each \emph{fixed} value of $V_r$ (which depends on the the last $r$ rounds).
    Assuming adaptivity, one can no longer use this analysis, and so we turn to a more simplified analysis: we consider only the two phases: the first $r$ rounds and $r$ last rounds. At first, connectivity assures us that enough vertices learn the message, and next we analyse the probability of one of them to connect directly to the goal vertex $v$ by a noisy edge. As before, the first phase gives us $U_r \geq r$ with probability $1$. Next, we give analysis of the second phase, that correspond to the one of the middle rounds before.
    \subparagraph{The second phase}
    Note that if $v \in U_{r}$, then we are done. Otherwise, we consider the set $S = U_{r} \times \set{v}$ of potential edges, and show that one of them belongs to our smoothed dynamic graph in at least one of the intermediate graphs, with high probability. We use the fact that for $0 < k < 1$ the assertion is trivial, and analyse the case of $k \geq 1$.

    In this case, as before, we apply Lemma~\ref{lem: LB hitting a set of edges} for each round in this phase, and conclude that the probability of not adding any edge from $S$ is at most
    \[1 - c_1 \floor{k} \size{S} / n^2  \leq
    1 - c_1 k r / 2n^2 ,\]
    where the inequality follows from $\floor{k} \geq k/2$ and $\size{S} = \size{U_{r}}$.
    Thus, the probability of not adding any edge from $S$ in any of these $r$ noisy rounds is upper bounded by
    \[\left(1 - c_1 k r / 2n^2 \right)^{r} \leq e^{-c_1 r^2 k /(2n^2)},\]
    which is upper bounded by $n^{-2}$, for $r = 2n \cdot \sqrt{\log n/(c_1 k)}$.

    \medskip

    Using this value for $r$, for any value $k \geq 1$, and realization of $U_{r}$, the message has been passed to $v$ with probability at least $1 - n^{-2}$, within the whole $R=2r$ rounds of phase one and two. A  union bound over all vertices $v \neq u$ implies that $R$ rounds are enough to flood to the whole network with probability at least $1-1/n$.
\end{proof}

When $k$ is a constant, the above result is tight, which we prove by adjusting the oblivious dynamic network from Theorem~\ref{thm: fractional LB 1} to use the adversary's adaptivity, as follows.
The basic structure of the hard instance for flooding is achieved by roughly splitting the graph into \emph{informed} and \emph{uninformed} vertices. In order to ensure low diameter, all the uninformed vertices are connected as a star, centered at a changing uninformed vertex called a \emph{head}, which in turn connects them to a star composed of the informed vertices.
The key idea in the analysis of this process, is that the head at each round should not become informed via noisy edges at an earlier round, as in this case it will immediately propagate the message to all the uninformed vertices, completing the flooding.
An oblivious adversary picks a sequence of heads at the beginning of the process, and this invariant is kept since the probability of any of the selected heads to get informed too early is low. However, after roughly $n^{2/3}$ rounds, the probability that a selected head is informed becomes too high.
An adaptive adversary, on the other hand, can continue crafting a hard instance graph for a linear number of rounds --- in each round, the adversary knows which vertices are uninformed and can pick one of them as the new head, thus overcoming this obstacle.

\iffalse
The main idea is as follows: the hard instance for an oblivious adversary consists of two star graph connected through a bridge, roughly splitting the graph to \emph{informed} and \emph{uninformed} vertices.
As each point in time, a different vertex is put at the center of the star graph of uninformed vertices; this vertex is called a \emph{head}.
Within $R$ rounds, there will only be $R$ different potential heads. A key in the analysis is the chance of such potential \emph{head} getting the message prematurely, as this event would immediately end the flooding process within $1$ round.
Using adaptivity, an adversarial network can easily pick and choose the next head to be a \emph{currently uninformed} vertex, thus overcoming this obstacle, attaining much power over its oblivious counterpart.
\fi

\begin{theorem}
    \label{thm: fractional adaptive LB}
    Fix $0 < k \leq n/16$, not necessarily an integer. For any $k$-smoothed \emph{adaptive} dynamic graph, for the flooding process to succeed with probability at least $1/2$ it must run for $\Omega(\min \set{n, n \log k / k})$ rounds.
\end{theorem}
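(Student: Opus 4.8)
The plan is to adapt the oblivious ``two-star / spool'' construction underlying Theorem~\ref{thm: fractional LB 1} so that it exploits the adversary's adaptivity, following the sketch given before the theorem. We may assume $k\ge 1$: for $k<1$ the claimed bound is non-positive and there is nothing to prove. The adversary maintains at every round $i$ the partition of $[n]$ into the set $I_i$ of vertices informed at the start of round $i$ (which it can compute from $G'_0,\dots,G'_{i-1}$, since the algorithm is deterministic) and $U_i=[n]\setminus I_i$. If $U_i=\emptyset$ flooding is already over and we stop; otherwise the adversary picks a \emph{fresh head} $h_i\in U_i$ and lets $G_i$ be the tree consisting of a star on $I_i$ centered at the source vertex $c$, a star on $U_i$ centered at $h_i$, and the single bridge edge $\{c,h_i\}$. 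This graph is connected, hence in $\gallowed$, and has diameter at most $3$, so it is a legal choice. The crucial use of adaptivity is that $h_i$ is guaranteed uninformed, which defeats the only failure mode of the oblivious construction (a pre-committed head being hit by a noisy edge after $\approx n^{2/3}$ rounds); moreover, since the plain $k$-smoothing noise is always $\roundp(k)$ regardless of how many edges the adversary changes, in $G_{i+1}$ the adversary is free to move $h_i$ into the informed star, delete every noisy edge produced in round $i$, and pick a new head, so no shortcut ever persists across rounds.

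Next I would bound the growth of $|I_i|$. Fix the history, which determines $I_i,U_i,h_i,G_i$. In $G_i$ the only $I_i$--$U_i$ edge is the bridge, so any other $I_i$--$U_i$ edge of $G'_i$ was added by the smoothing. A leaf $w\in U_i\setminus\{h_i\}$ becomes informed during round $i$ only if $G'_i$ contains an edge from $w$ to $I_i$: a two-hop noisy path from $I_i$ into $U_i$ does not inform its far endpoint within a single round, and $w$ itself cannot relay the message onward during round $i$. Hence, writing $Y_i$ for the number of leaves of $U_i$ joined to $I_i$ by noise in round $i$, we get $|I_{i+1}|\le |I_i|+1+Y_i$. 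Applying Lemma~\ref{lem: UB adding from a set of edges} to $S=I_i\times\{w\}$ (which is disjoint from the edges of $G_i$, and $\roundp(k)\le n/16$) for each of the at most $n$ leaves $w$ and summing, $\Expc{Y_i\mid\text{history}}\le c_2\,\roundp(k)\,\size{I_i}\,\size{U_i}/n^2\le c'k\,\size{I_i}/n$ for a suitable constant $c'$, which we may take to satisfy $c'\le 16$ so that $\lambda:=1+c'k/n\le 2$. Taking expectations gives the near-geometric recursion $\Expc{\size{I_{i+1}}}\le\lambda\,\Expc{\size{I_i}}+1$, and unrolling from $\size{I_0}=1$ yields $\Expc{\size{I_i}}\le (\lambda^{i+1}-1)/(\lambda-1)=n(\lambda^{i+1}-1)/(c'k)$.

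Finally I would calibrate $R$ using $\lambda^{R+1}\le e^{c'k(R+1)/n}$. When $k$ exceeds a suitable constant, choosing $R=\Theta\!\big(\tfrac{n}{k}\log k\big)$ makes $e^{c'k(R+1)/n}\le c'k/8$, so $\Expc{\size{I_R}}\le n/8$; when $k=O(1)$, the same estimate together with $e^x-1\le 3x$ for $x\le 1$ and $R=\Theta(n)$ (a small enough constant fraction of $n$) again gives $\Expc{\size{I_R}}\le n/8$. In either regime, since $\size{I_i}$ is monotone, Markov's inequality yields $\Pr[\text{flooding finished within }R\text{ rounds}]=\Pr[\size{I_R}=n]\le 1/8<1/2$. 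Combining the two regimes, flooding cannot succeed with probability at least $1/2$ in fewer than $\Omega(\min\set{n,n\log k/k})$ rounds, as claimed.

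The main obstacle, I expect, is the second step: arguing rigorously that under adaptivity each round inflates the expected informed count by only a $(1+O(k/n))$ multiplicative factor --- i.e.\ that noisy edges can only directly inform isolated leaves of the uninformed star and never trigger a chain reaction within one round, and that the adversary's freedom to re-wire $G_{i+1}$ prevents any cross-round accumulation of shortcuts --- together with the calibration that turns the recursion $\Expc{\size{I_i}}\approx (n/c'k)(\lambda^i-1)$ into the precise $n\log k/k$ threshold, the $\log k$ appearing because $\lambda^R$ must reach order $k$ while $\ln\lambda\approx c'k/n$. Splitting into the two ranges of $k$ to recover the $\min$ with $n$ needs a little care but is routine.
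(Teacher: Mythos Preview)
Your proposal is correct and follows essentially the same route as the paper: the adaptive spooling graph (two stars joined by a bridge, with the uninformed-side head chosen afresh each round from the currently uninformed set), the single-round recursion $\Expc{\size{I_{i+1}}}\le(1+O(k/n))\Expc{\size{I_i}}+1$ obtained via Lemma~\ref{lem: UB adding from a set of edges}, and a Markov argument on the resulting near-geometric growth. The only real difference is bookkeeping: the paper splits the recursion into an ``additive phase'' ($A_i\le n/(c_2 k)$, growth $\le 2$ per round, lasting $\Theta(n/k)$ rounds) followed by a ``multiplicative phase'' (lasting $\Theta(n\log k/k)$ rounds), while you solve the recursion in closed form $(\lambda^{i+1}-1)/(\lambda-1)$ and calibrate $R$ directly; both give the same threshold. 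One cosmetic point: you cannot simply ``take $c'\le 16$'', since $c'$ is fixed by the constant $c_2$ of Lemma~\ref{lem: UB adding from a set of edges}; but nothing in your argument actually uses $\lambda\le 2$, so this is harmless.
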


\begin{proof}
    We start by formally defining a hard case for flooding with noise.
    The base to our definition is the \emph{spooling graph}, which was defined by Dinitz \etal~\cite{DFGN18}:
    at time $i$, the edges are $\set{(j,i)}_{j=1}^{i-1}$, a star around the vertex~$i$ (this will be the informed spool)
    and edges $\set{(i+1,j)}_{j=i+2}^{n}$, a star around the vertex $i+1$, which will be the right spool, with vertex $i+1$ crowned as the \emph{head}. the spools are then connected using the additional edge $(i, i+1)$.

    We define the \emph{adaptive spooling graph}, as follows: at first $E_1 = ([n]\setminus\set{2}) \times \set{2}$. After the first iteration, $2$ learns the message and is removed to the left (informed) spool. We denote for each round $i$ the set of informed vertices at the end with $I_i$. We also use $u_i$ for the lowest index of an uninformed vertex at the end of round $i$. Formally $u_i = \min\set{\bar{I}_i}$. We define the adaptive spooling graph at time $i+1$ to have the edge set
    \[E_{i+1} =  \set{1} \times (I_i\setminus \set{1}) \ \cup \ \set{(1,u_i)} \ \cup \ \set{u_i} \times (\bar{I}_i \setminus \set{u_i}).\]

    In this graph, essentially, $1$ is connected to all already-informed vertices, and also to the next head $u_i$, which is connected to all other uninformed vertices. The static diameter stays $3$ at each iteration, but now we are promised that at iteration $i+1$, the new head $u_{i}$ is uninformed at the beginning of said round.

    We next show that the \emph{expected} number of informed players cannot grow by much.
    %grows by a small multiplicative factor, where at each round we are also promised that $1$ new vertex learns the message ($u_i$~itself).

    \begin{claim}
    \label{claim: expected informed vertices}
        When taking expectation over the noise we add, we have
        $$\Expc{\size{I_{i+1}}} \leq (1+c_2 k/n)\cdot \Expc{\size{I_{i}}} + 1 .$$
    \end{claim}
    Note that $I_i$ does not depend on the noise of rounds $i+1,\dots,R$, and so the left side takes expectation over $i+1$ rounds of noise, and the right side over $i$ rounds.

    Intuitively, the claim states that the expected growth in the number of \emph{informed} vertices is bounded by an additive-multiplicative progression (i.e., at each step the amount grows multiplicatively and then a constant is added). This is true as the noisy edges induce a multiplicative growth, and connectivity forces that at least $1$ additional vertex (in fact, $u_i$ itself) receives the message. The proof of this is deferred to Appendix~\ref{Sec: appendix thm proof}.

    Using the claim, we analyze the progression of $A_i = \Expc{\size{I_i}}$, splitting to $2$ cases:
    \begin{enumerate}
        \item For $A_i \leq n/(c_2 k)$, the additive term is larger, and so in this range $A_{i+1} \leq A_i + 2$.
        \item For $A_i \geq n/(c_2 k)$, the multiplicative term overcomes and we get $A_{i+1} \leq (1 + 2c_2 k/n)A_i$.
    \end{enumerate}
    Note that we also have $A_{i+1} \geq A_i + 1$, using connectivity, showing this bounds on the progression are rather tight.

    We next split into cases: if $k < 1/c_2$, the additive term controls the process through $\Theta(n)$ iterations, giving us $A_{n/20} \leq n/10$.
    Otherwise, the multiplicative term would come into play.
    We denote the number of additive rounds by $r_0$: the minimal index such that $A_{r_0} \geq n/(c_2 k)$.
    Since $A_0 = 1$ (only the vertex $1$ knows the message at the beginning), and we have good bounds on the progression, we conclude $r_0 = \Theta(n/k)$.

    Next, we allow additional $r_1$ rounds in which the multiplicative term is dominant. We get
    \[A_{r_0 + r_1} \leq A_{r_0} (1+2c_2 k/n)^{r_1} \leq \Theta(n/k) (1+2c_2 k/n)^{r_1}.\]
    Taking $r_1 = \delta(n\log k / k)$ with small enough $\delta$, we have $A_{r_0+r_1} = \Expc{\size{I_{r_0+r_1}}}\leq n/10$.

    In both cases, there is a round~$R$ with $\Expc{\size{I_R}} \leq n/10$.
    By Markov inequality,
    %we know that with at least $90\%$,
    strictly less than $n$ vertices are informed after $R$ rounds, w.p. at least~$0.9$.
    Note that for the first case $R = \Theta(n)$, and for the second case $R  = r_0 + r_1 = O(n\log k /k)$, concluding the proof.
\end{proof}

%New version
\section{Responsive noise}
\label{sec: responsive noise}
In this section we consider \emph{responsive} noise in the dynamic network, where some elements of the noise incurred in each round relate to the changes done at this round.
We consider this model as complement to one discussed in the previous section: the parameter $k$ of ``noise per round'' is fit to model some internal background noise in a system.
However, in a more realistic model we would expect the noise to work in different patterns in times of major changes in the network, as opposed to rounds where no changes were made at all.

To this end, we introduce two variants of a responsive noise: one is that of \emph{proportional} noise, and the other is \emph{targeted} noise.
In the first, the amount of noisy edges would relate to the amount of changes that occurred in the last step.
In the second variant, we expect the noise to specifically ``target'' newly modified edges. This could relate to a somewhat limited adversary (in aspect of ability to change the graph).

In the responsive model, an interesting phenomenon occurs: an adversarial network can choose to stay still, in order to force a round where no noise occurs.
For the flooding problem, this strength is limited: whenever the static diameter of each iteration is upper bounded by $D$, the waiting game can only last $D-1$ rounds. To that end, we show how this model affects the analysis of the upper bound, and yet again incorporate the new phenomenon described to devise a non-trivial lower bound.

\subsection{Proportional noise}

\begin{theorem}
    \label{thm: FP - UB}
    Fix $0 < \eps$. For any $\eps$-proportionally smoothed dynamic graph with static diameter at most $D$. If no noise invokes more than $n/16$ changes, the flooding process finishes after $O(n^{2/3} \cdot ( D \log n / \eps )^{1/3})$ rounds, w.h.p.
\end{theorem}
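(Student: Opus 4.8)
The plan is to adapt the three-phase argument from the proof of Theorem~\ref{thm: fractional UB}, with one essential new ingredient to handle the fact that noise is now proportional to the number of changes the adversary makes, and hence can be zero in a given round. As before, fix the source $u$ and a target $v$, set $R = 3r$ for a value $r$ to be determined, and split into three phases: the first $r$ rounds grow the informed set $U_i$ (so $\size{U_r}\geq r$ by connectivity, with probability $1$), the last $r$ rounds grow the ``reverse'' set $V_i$ of vertices that can reach $v$ in the last $i$ rounds (so $\size{V_r}\geq r$, with probability $1$, using obliviousness exactly as in Theorem~\ref{thm: fractional UB}), and the middle $r$ rounds are used to argue that some edge of $S = U_r \times V_r$ gets added by noise, unless $U_r\cap V_r\neq\emptyset$ in which case we are already done. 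The number of potential edges is $\size{S}\geq r^2$.

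The new issue is that in the middle phase, a round produces noise only if the adversary proposes at least one change, and the amount of noise is $\roundp(\eps\cdot\size{G'_{i-1}\oplus G_i})$. Here is where the static diameter enters: I would group the $r$ middle rounds into blocks of $D$ consecutive rounds. Within any block of $D$ rounds, the graph cannot stay completely change-free, because if $G'_{i-1}=G_i=G_{i+1}=\dots$ over $D$ rounds then flooding would already have reached $v$ on this static graph of diameter $\le D$ (and we may assume the process is not yet over, otherwise there is nothing to prove). So each block contributes at least one change, hence at least one noisy edge, i.e., a round in which we may invoke Lemma~\ref{lem: LB hitting a set of edges}. Slightly more carefully: in each block there is a round $i$ with $t_i=\roundp(\eps\cdot\size{G'_{i-1}\oplus G_i})\geq 1$ — one must check that a single proposed change gives $\eps\cdot 1$, whose $\roundp$ is $1$ with probability $\eps$; so I would instead say that over the block the total proposed change is $\ge 1$, and either some round already has $\size{\cdot}\ge 1/\eps$ giving $t_i\ge 1$ deterministically, or we accumulate and use a Chernoff-type bound over blocks to guarantee that a constant fraction of the $r/D$ blocks actually realize a noisy edge (expectation $\ge \eps$ per unit change, $r/D$ blocks, so $\Omega(\eps r / D)$ noisy rounds w.h.p., mirroring the $0<k<1$ case of Theorem~\ref{thm: fractional UB}). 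For each such noisy round, Lemma~\ref{lem: LB hitting a set of edges} (applicable since by hypothesis no noise invokes more than $n/16$ changes, and $t_i\size{S}\le n^2/2$ for the chosen $r$) gives probability at most $1 - c_1\size{S}/n^2 \le 1 - c_1 r^2/n^2$ of missing $S$. Multiplying over the $\Omega(\eps r/D)$ independent noisy rounds yields a bound of $\exp(-\Omega(\eps r^3/(Dn^2)))$, which drops below $n^{-2}$ once $r = \Theta(n^{2/3}(D\log n/\eps)^{1/3})$.

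Finishing: union-bound over the (at most polynomial) error from the Chernoff step guaranteeing enough noisy rounds and over all targets $v\neq u$, conclude that $R=3r = O(n^{2/3}(D\log n/\eps)^{1/3})$ rounds suffice w.h.p.; and of course $R$ is capped at the trivial $O(n)$ bound coming from connectivity alone, matching the $\min$ in the statement. I expect the main obstacle to be the middle-phase argument: making rigorous the claim that every $D$-round window contributes a guaranteed unit of proposed change (it relies on the observation that a change-free window of length $D$ would itself complete the flooding, so this has to be interleaved with the running assumption ``flooding not yet over''), and then converting ``total proposed change $\ge r/D$'' into ``$\Omega(\eps r/D)$ rounds with an actual noisy edge'' via a Chernoff bound over the independent $\roundp$ outcomes — handling carefully the borderline case where individual change-counts are large enough to force $t_i\ge 1$ deterministically versus the case where they are tiny and must be aggregated.
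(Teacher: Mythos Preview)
Your proposal is correct and matches the paper's proof essentially step for step: three phases with $|U_r|,|V_r|\ge r$ by connectivity, the key observation that any $D$-round window without a proposed change would itself finish the flooding (so there are $\ge r/D$ ``change rounds'' in the middle phase), a Chernoff bound turning these into $\Omega(\eps r/D)$ rounds with an actual noisy edge, and then Lemma~\ref{lem: LB hitting a set of edges} on each such round to get the $\exp(-\Omega(\eps r^3/(Dn^2)))$ failure bound. The paper handles the boundary case $n\le c\,D\log n/\eps$ (where the trivial $O(n)$ bound dominates) up front and uses it to control the Chernoff error term, exactly as you anticipate in your final paragraph.
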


The proof resembles the proof of Theorem~\ref{thm: fractional UB} with three phases, which handles the ``waiting game'' mentioned above. Given static diameter $D$, an adversarial network can only stay intact for $D-1$ rounds, thus in the second phase at least $1/D$ of the rounds introduce one changed edge (or more), inferring that w.h.p. noise is incurred in $\Omega(\eps/D)$ rounds.

\begin{proof}
    We start by noting that if $n \leq c \cdot D \log n / \eps$, for some constant $c$, this bound is larger than the trivial $O(n)$ guaranteed by connectivity and we are done. Next, we follow a similar argument as the one in the proof of Theorem~\ref{thm: fractional UB}.
    For the oblivious case, we had 3 phases: first $r$ rounds and last $r$ rounds are always guaranteed connectivity, and so we have the random sets $U_r, V_r$ such that with probability $1$ we know $\size{U_r}, \size{V_r} \geq r$. If the sets intersects we are done, so assume they do not, and denote $S = U_r \times V_r$. Note that $\size{S} \geq r \cdot r = r^2$.
    Note that the size of $S$ is guaranteed even though its realization depends on the first and last rounds. Since the adversary is not adaptive, we can turn to analyse chance of adding a noisy edge from the set $S$ during the $r$ middle rounds. We note that the adversary can now play the waiting game in order to force a round where no noise is invoked.
    \subparagraph{The middle rounds.}
    As we are promised a static diameter of at most $D$, we know that if the networks does not change for $D$ steps consecutively, the flooding is guaranteed to finish successfully. Therefore, we can assume that within the $r$ middle rounds, once every $D$ rounds, we have a round where changes happen, and therefore potential noise might occur.
    This overall guarantees us $r/D$ such rounds,
    %for which we are guaranteed at least one change in each, invoking the minimal noise of $\eps$.
    %We now have $r/D$ relevant rounds, and
    where in each of them noise occurs independently with probability at least $\eps$ (that correspond to the amount of noise if only a single change was made).
    We apply a standard Chernoff bound to say that with all but probability at most $e^{-0.4\eps r / D}$, we have $\eps r/(10D)$ rounds in which at least one noisy edge was added.%
    \footnote{The expectation over all rounds is at least $\eps r / D$ rounds with noisy edges.} % We apply a multiplicative version with $\delta = 0.9$.}
    We later bound this term.
    \hide
    {
    Using the same argument as in the proof of Theorem~\ref{thm: fractional UB} (for the case $k < 1$) we get by Chernoff bound that with $90\%$, at least $\eps r / (10D)$ rounds invoked noise of at least a single edge.
    }
    We write $t_i$ for the amount of noise at each such round, for which we know $1 \leq t_i \leq n/16$, using the premise. Now, one can safely apply Lemma~\ref{lem: LB hitting a set of edges} with $S$ for each such round, to upper bound the probability of not adding any such potential edge:
    \[\left(1 - c_1 t_i\size{S} / n^2 \right) \leq \left(1 - c_1 r^2 / 2n^2 \right),\]
    where the inequality simply follows from $t_i \geq 1$ and $\size{S} \geq r^2$.
    Thus, the probability of not adding any edge from $S$ in any of the $\eps r/(10D)$ noisy rounds is upper bounded by
    \[\left(1 - c_1 r^2 / 2n^2 \right)^{\eps r/(10D)} \leq e^{-c_1 r^3 \eps /(20D n^2)},\]
    which is upper bounded by $1/(2n^2)$, whenever $r \geq 20 n^{2/3} \cdot \left[D \log n / (c_1\eps) \right]^{1/3}$.

    \medskip

    We now recall assuming that $n \geq c \cdot D \log n /\eps$, using the right constant $c$ (as function of the constant $c_1$). For this case, we also get $r \geq  10 D\log n / \eps$, for which the case of too few noisy rounds is bounded by $e^{-0.4\eps r / D} \leq e^{- 3 log(n)} \leq 1/(2n^2)$.

    By union bound we get the following: using $r = 20 n^{2/3} \cdot \left[D \log n / (c_1\eps) \right]^{1/3}$, for any $D \geq c\cdot \eps n / \log n$, and any realization of $U_r, V_r$, the message has been passed to $V_r$ within the $r$ middle rounds with probability at least $1 - n^{-2}$, and so after $R = 3r$ rounds, $v$ received the message with the same probability. Using union bound over all vertices in the network, we conclude see that $R$ rounds suffice for flooding with probability at least $1-1/n$.
    \hide %Old
    {
    To conclude, using $r = 20 n^{2/3} D^{1/3}/ (c_1\eps)^{1/3}$: for any realization of $U_{r}, V_{r}$, the message has been passed to $V_r$ during the middle rounds (and therefore to $v$ by the end of all rounds) with probability at least $1/2$.

    By repeating the process for $2\log n$ times, we have failure probability of $n^{-2}$ for each non-source vertex $v$. A  union bound over all such vertices implies that $R\log n$ rounds are enough to flood to the whole network with probability at least $1-1/n$.
    }
\end{proof}

Adjusting the same argument for an \emph{adaptive} network, using only two phases (same as Theorem~\ref{thm: fractional adaptive UB}), would give an upper bound of $O(n \cdot \sqrt{D \log n / \eps})$, which in this case does not improve upon the trivial $O(n)$ promised by connectivity alone.

We continue to show that this is tight, and in the proportional noise model, the changes in the network can be so subtle that they would invoke little noise and the flooding would barely speed up at all, asymptotically. We show the following lower bound, and note that the graph sequence constructed in the proof has constant $D$, where it is hardest to manipulate.

\begin{theorem}
\label{thm: FP - adaptive LB}
    Fix $\epsilon \leq 1/5$.
    There exists an $\eps$-proportionally smoothed dynamic graph with an adaptive adversary, where a flooding process must run for $\Omega(n)$ rounds in order to succeed with probability at least $1/2$.
\end{theorem}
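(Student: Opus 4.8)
The plan is to build on the ``informed-star / uninformed-star joined by a bridge'' skeleton of the adaptive spooling graph underlying Theorem~\ref{thm: fractional adaptive LB}, but to redesign it so that the adversary changes only a \emph{constant} number of edges per round. This is exactly the leverage the proportional model offers: a round with $m$ proposed changes produces noise of magnitude $\roundp(\eps m)$, so if $m=O(1)$ and $\eps\le 1/5$, then each round carries at most a single noisy edge, and it appears with probability only $O(\eps)$.

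Concretely, I would fix once and for all a source $s$ and a distinguished vertex $h$ --- the \emph{permanent head}. For an informed set $I\ni s$ with $h\notin I$ and a \emph{bridge vertex} $z\in\bar I\setminus\{h\}$, call the \emph{canonical} graph the spanning tree made of a star on $I$ centered at $s$, a star on $\bar I$ centered at $h$, and the single edge $(s,z)$. Two observations drive the construction: its diameter is at most $4$ (so the constructed sequence has static diameter $O(1)$, as the remark after the theorem claims), and it is a tree, so \emph{every} single-edge deletion disconnects it; hence a $1$-smoothing of a canonical graph can only \emph{add} an edge (or leave it unchanged). The adversary's strategy: in round $i$, having observed $G'_{i-1}$ --- and therefore the informed set reached so far --- propose $G_i$ equal to the canonical graph on the current informed set with a \emph{fresh} bridge vertex $z_i$. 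If the previous round's noise informed some extra vertex $b\ne h$, the adversary also wires $b$ into the star around $s$ and deletes the edge $(h,b)$. A short case analysis (noise-free round; noisy edge inside one side; noisy crossing edge, incident to $s$ or not) shows that in every case $|G'_{i-1}\oplus G_i|\le 5$ (and exactly $2$ in a noise-free round). Hence $t_i=\roundp(\eps\cdot|G'_{i-1}\oplus G_i|)\le 1$ always, so at most one noisy edge appears per round, and it appears with probability at most $5\eps$.

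Two bad things can happen in round $i$: (i) the noisy edge joins the informed set to $h$, which informs the head and makes flooding finish one round later; (ii) it joins the informed set to some $b\ne h$, adding one extra informed vertex. Both probabilities are bounded via Lemma~\ref{lem: UB adding from a set of edges} with $t=1$: the $h$-to-$I_i$ non-edges number at most $|I_i|\le n$, giving probability at most $5\eps c_2|I_i|/n^2\le 5\eps c_2/n$ for (i); the $I_i$-to-$\bar I_i$ non-edges number at most $|I_i|\,|\bar I_i|\le|I_i|n$, giving probability at most $5\eps c_2|I_i|/n$ for (ii). Let $B$ be the event that the head is ever informed within the first $R$ rounds; then $\Pr[B]\le 5\eps c_2 R/n\le 1/4$ as soon as $R\le n/(20\eps c_2)$, which is a constant fraction of $n$ since $\eps$ is a fixed constant at most $1/5$. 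On the complement, exactly the bridge vertex $z_i$ --- plus at most one noise-informed vertex --- joins the informed set each round, so, writing $B_i$ for ``the head is informed by round $i$'', we obtain, as in Claim~\ref{claim: expected informed vertices}, the additive--multiplicative recursion
\[
\Expc{\size{I_{i+1}}\,\ind{\bar B_{i+1}}}\ \le\ \bigl(1+10\eps c_2/n\bigr)\,\Expc{\size{I_{i}}\,\ind{\bar B_{i}}}\ +\ O(1),
\]
where the $O(1)$ absorbs both the deterministic ``$+1$'' from connectivity and the negligible contribution of event (i) to the conditional expectation. Unrolling over $R=\Theta(n)$ rounds, the multiplicative factor only inflates things by a constant, so $\Expc{\size{I_{R+1}}\,\ind{\bar B}}=O(R)$, which is below $n/4$ for a suitable $R=\Omega(n)$. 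Combining, for such an $R$ we get $\Pr[\text{flooding finished by round }R]\le\Pr[B]+\Expc{\size{I_{R+1}}\,\ind{\bar B}}/n\le 1/4+1/4=1/2$ by Markov's inequality, so with probability at least $1/2$ the flooding has not finished after $\Omega(n)$ rounds, which is the claim. I expect the main obstacle to be the bookkeeping: first, verifying the $O(1)$ bound on $|G'_{i-1}\oplus G_i|$ through the case analysis of how a single noisy edge interacts with the adversary's one-step repair, which is exactly where $\eps\le 1/5$ gets used; and second, handling the conditioning on $\bar B$ carefully in the expectation recursion, so that the rare event of informing the head does not leak into the Markov bound. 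The remaining structure follows the template of the lower-bound proof of Theorem~\ref{thm: fractional adaptive LB}.
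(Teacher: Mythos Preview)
Your proposal is correct and follows essentially the same approach as the paper: the same two-star-with-bridge construction (the paper uses vertices $1$ and $n$ in place of your $s$ and $h$), the same case analysis yielding $|G'_{i-1}\oplus G_i|\le 5$ and hence at most one noisy edge per round under $\eps\le 1/5$, and the same union bound on the probability that the head becomes informed within $R=\delta n$ rounds. The only notable difference is that you set up a multiplicative--additive recursion for $\Expc{|I_i|\ind{\bar B_i}}$ in the style of Claim~\ref{claim: expected informed vertices}, whereas the paper simply observes that on $\bar B$ at most two vertices (the bridge plus at most one noise-informed vertex) join $I$ per round, giving the deterministic bound $|I_i|\le 3i$ directly and making the final Markov step unnecessary.
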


In the current, adaptive model, the spooling graph no longer stands as a slow-flooding graph. Changing the center of the uninformed star takes $\Omega(n)$ edge changes in early rounds, and the proportional noise invoked by these changes would be devastating (e.g., for~$\eps = \Omega(1)$).

Instead, in the following proof, the adversary relies on its adaptivity in order to maintain control over the network. This is done using a sequence of graphs that require only $O(1)$ changes at each round.

\begin{proof}
    We modify the spooling graph in a way that only few edges are changed at each step. Let $G_1,\dots, G_{n-1}$ be the sequence of graphs over $[n]$, each having the edge set
    \[E_i = \set{(1,j)\mid j\leq i} \cup \set{(j,n)\mid j \geq i}.\]
    Note that $G_i$ is connected through the vertex $i$. These graphs are essentially two star graphs, one around $1$ and one around $n$ that are connected through one other vertex (a different one at each time frame). We associate the star centered in $1$ with the set of \emph{informed} vertices and the star centered in $n$ with the set of \emph{uninformed} vertices (as $1$ is the the vertex from which the flooding begins).

    Wishfully, we would like to say that at each time connectivity is preserved via the vertex $i$, but as soon as it receives the message, it is being cut off from the uninformed part of the graph. However, the added noise would impair this argument. As we wish to use a low-maintenance dynamic graph, we cannot afford to replace the center. Therefore, we surgically cut off from the uninformed star all the vertices that learned the message via edges added by noise.
    This enforces us to incorporate into the lower bound the concept of \emph{adaptivity}.
    Let $G_1$ be as before. we define the rest of the graphs in the sequence adaptively. Let $I_i$ be the set of vertices that received the message by the end of round $i$, and let $\bar{I}_i = V / I_i$, the set of uninformed vertices at the same time. We define the connecting vertex of the next graph as $u_{i+1} = argmin(\bar{I}_i)$. The graph $G_{i+1}$ in the sequence is then defined using
    % \[E_{i+1} = \set{(1,j)}_{j\in I_i} \bigcup \set{(j,n)}_{j\in\bar{I}_i} \bigcup (1,u_i).\]
    \[E_{i+1} = \set{(1,j)\mid j\in I_i} \cup \set{(j,n)\mid j\in\bar{I}_i} \cup \set{(1,u_i)}.\]

    Consider the first $R = \delta n$ rounds of the process, for small enough $\delta$. We show that if no edge of $I_i \times n$ was added at some round $i+1$, which happens with small probability, we have:
    \begin{enumerate}
        \item The number of required changes at each iteration is at most constant at each round.
        \item At each point in time $\size{I_j} < 2j$.
    \end{enumerate}

    For the first bullet, note that $2 \leq \size{G'_{i-1}\oplus G_{i}} \leq 5$. Indeed, at the first round we only change $2$ edges, replacing $(u_1,n)$ by $(1,u_2)$.
    By induction, for any round $i > 1$, if noise was invoked in the previous round, then since $5 \eps < 1$, at most one extra edge was changed in $G'_{i-1}$. We have 3 options for the toggled edge
    \begin{itemize}
        \item If an edge was added by noise, connecting an informed vertex with some uninformed vertex $w \neq n$, we remove the edge, cut off the edge $(w,n)$ as well, and add $(1,w)$ instead.\footnote{Leaving $w$ connected through the vertex $1$ and not the vertex who sent him the message keeps our low diameter intact. A more freely defined graph with no diameter guarantees could simply remove $(w,n)$ and leave it connected to the informed spool.}
        \item If an edge was added by noise, connecting two uninformed vertices, we cut it off immediately (rather than analysing its affect later, when one of them  becomes informed).
        \item If any other edge was connected (or disconnected) by noise, we simply revert the change, to keep our graph in line with its definition.
    \end{itemize}

    Either way, we need to add two more changes (if not done already above) which consist of removing the edge $(u_{i-1},n)$ and adding the edge  $(1,u_i)$ instead. In total, the number of changes in the next round stays between $2$ and $5$, proving the induction step.

    For the second bullet, note that the amount of noisy edges produced at each round is at most $1$, which means that at most $2$ new vertices learns the message in each round, including the one connecting the graph. This means that $\size{I_i} \leq \size{I_{i-1}} + 2$, or alternatively $\size{I_j} \leq 1 + 2j \leq 3j$.

    As both the above claims hold whenever $n$ has not yet learnt the message, we are left to show that with high probability, within $R = \delta n$ rounds, this is indeed the case. We union bound over all $R$ rounds: as long as it has not yet happened yet, the probability of $n$ getting the message at the next round is that of a noisy edge connecting $I_i$ to the vertex $n$.

    We apply Lemma~\ref{lem: UB adding from a set of edges} for each round $i < R$, using $S_i = I_i \times \set{n}$, and assuming $n$ is not yet informed. Thus, the probability is at most:
    \[c_2 \size{S_i} / n^2 \leq c_2 \cdot 3R / n^2\]
    since $\size{S_i} = \size{I_i} \leq 3i \leq 3R$, and the number of noisy edges is at most $1$. Applying a union bound over all $R$ rounds, the probability of vertex $n$ being prematurely informed is at most
    \[3c_2 \cdot R^2 / n^2 \leq 3c_2 \delta^2.\]
    For small enough $\delta$, this is at most $0.1$, so with probability at least $0.9$ after $R$ rounds we indeed have at most $3R < n$ informed vertices, which means the flooding is not yet over.
\end{proof}

\subsection{Targeted noise}
For targeted noise this new phenomenon is surpassed by the limited ability of the adversary to make changes. We can think of targeted noise as some sort of ``slow down'', as if the network repeatedly tries to modify some of the edges, eventually succeeding, but not before a number of rounds has passed.
In this last model we show just how strong the waiting game can be: for a graph with constant static diameter (which makes the waiting game obsolete), flooding will take $O(\log n)$ rounds, with high probability. However, for larger value of $D$, the same analysis would quickly fail. For $D = \Theta\left(\sqrt{\log n}\right)$ we get the trivial bound of $O(n)$. We finish by showing an explicit construction with static diameter $D = \Theta\left(\log n\right)$ that relies strongly on the waiting game and admits a lower bound of $\Omega(n)$ rounds.

\begin{theorem}
\label{thm: flooding targeted UB}
For any $\epsilon$-targeted smoothed dynamic graph with static diameter $D$, flooding can be done in $O(D \log n /\eps^{D^2})$ rounds, w.h.p.
\end{theorem}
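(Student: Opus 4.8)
The plan is to group the rounds into consecutive \emph{windows} of length $D$ and to show a single‑window statement: for any fixed target $v\neq u$, if $v$ is still uninformed at the start of a window, then regardless of the (adaptive) adversary's choices and independently of the entire past, $v$ becomes informed by the end of that window with probability at least $\epsilon^{D^2}$. Given this, running $m=\lceil 2\ln n/\epsilon^{D^2}\rceil$ windows makes the per‑target failure probability at most $(1-\epsilon^{D^2})^m\le n^{-2}$, and a union bound over the $n$ possible targets $v$ gives success w.h.p.; the round count is $mD=O(D\log n/\epsilon^{D^2})$, capped as usual by the trivial $O(n)$ bound from connectivity (hence the $\min$ with $n$ in the table).

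For the single‑window claim, fix $v\neq u$ and a window consisting of rounds $j+1,\dots,j+D$, and condition on the whole history through round $j$, assuming $v\notin I_j$ (where $I_i$ is the informed set after round $i$; recall $u\in I_i$ for all $i$ and that $I_i$ is monotone). When the adversary reveals the proposed graph $G_{j+1}$, I would fix $P=(u=x_0,x_1,\dots,x_\ell=v)$ to be a shortest $u$--$v$ path in $G_{j+1}$; since $G_{j+1}$ has diameter at most $D$, we have $\ell\le D$. The key observation is that if $P\subseteq G'_{j+i}$ for every $i=1,\dots,\ell$, then $v\in I_{j+\ell}\subseteq I_{j+D}$: $x_1$ is informed after round $j+1$ because $x_0=u$ is informed and $(x_0,x_1)\in G'_{j+1}$, and inductively $x_i$ is informed after round $j+i$. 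So it suffices to lower bound $\Pr\!\left[P\subseteq G'_{j+i}\text{ for all }i\le\ell\right]$.

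I would bound this probability round by round, using that the round‑$(j+i)$ noise consists of independent per‑edge coins (one per edge whose status the adversary proposes to toggle relative to $G'_{j+i-1}$), each failing with probability $\epsilon$. In the first round, every edge of $P$ lies in $G_{j+1}$, so the only edges of $P$ at risk are those \emph{newly added} relative to $G'_j$; each survives with probability $1-\epsilon$, giving $\Pr[P\subseteq G'_{j+1}]\ge(1-\epsilon)^{\ell}$. In a later round $j+i$, conditioned on $P\subseteq G'_{j+i-1}$ and on the adaptively chosen $G_{j+i}$, an edge $e\in P$ is either retained by $G_{j+i}$ (then it is not perturbed and survives with probability $1$) or proposed for deletion (then the noise re‑adds it with probability $\epsilon$); since the coins are independent across $e\in P$, $\Pr[P\subseteq G'_{j+i}\mid P\subseteq G'_{j+i-1}]\ge\epsilon^{\ell}$. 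Multiplying, $\Pr[P\subseteq G'_{j+i}\ \forall i\le\ell]\ge(1-\epsilon)^{\ell}\,\epsilon^{\ell(\ell-1)}\ge\epsilon^{\ell^2}\ge\epsilon^{D^2}$, where $(1-\epsilon)^\ell\ge\epsilon^\ell$ uses $\epsilon\le 1/2$ (for larger $\epsilon$ the graph only changes less and the analogous — or a better — bound holds). Since every estimate above is conditional on an \emph{arbitrary} history and an \emph{arbitrary} current proposal $G_{j+i}$, the argument is valid against an adaptive adversary, and successive windows behave like independent Bernoulli trials with success probability $\ge\epsilon^{D^2}$.

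The step I expect to require the most care is the conditioning on the family $\gallowed$ of connected graphs: a targeted smoothing is resampled until connected, so the per‑round noise is not literally a product of independent coins, and the clean ``$\ge\epsilon^\ell$ per round'' estimate must be re‑justified after conditioning. The fix is a monotonicity argument: for a fixed base $G_{j+i}$ and fixed $G'_{j+i-1}$, the edge set of $G'_{j+i}$ is a coordinatewise‑increasing function of the independent noise coins, so both ``$P\subseteq G'_{j+i}$'' and ``$G'_{j+i}$ is connected'' are increasing events, and by the FKG/Harris inequality conditioning on connectivity can only \emph{increase} $\Pr[P\subseteq G'_{j+i}]$. With this, the single‑window claim yields $\Pr[v\notin I_{j+D}\mid v\notin I_j,\ \mathcal{F}_j]\le 1-\epsilon^{D^2}$; iterating over $m=\lceil 2\ln n/\epsilon^{D^2}\rceil$ windows gives $\Pr[v\text{ uninformed after }mD\text{ rounds}]\le(1-\epsilon^{D^2})^m\le n^{-2}$, and a union bound over all $v\neq u$ completes the proof, with round complexity $mD=O(D\log n/\epsilon^{D^2})$ (and $O(n)$ otherwise).
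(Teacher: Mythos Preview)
Your proof is correct and follows the same route as the paper: split time into windows of $D$ rounds, fix a shortest $u$--$v$ path in the first graph of a window, argue it persists through all $D$ smoothed graphs of the window with probability at least $\epsilon^{D^2}$, then iterate over $\Theta(\log n/\epsilon^{D^2})$ windows and union-bound over targets. Your version is in fact more careful than the paper's in a few places --- you choose the path in the adversary's proposed graph $G_{j+1}$ (where the diameter bound actually applies), you separate out the first-round survival probability $(1-\epsilon)^\ell$, and you address the $\gallowed$-conditioning via an FKG/Harris monotonicity argument --- all points the paper leaves implicit.
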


Note that for $D = o(\sqrt{\log_{1/\eps} n})$, this improves upon the trivial fact that $n$ rounds are enough (as at each round, one new vertex must be informed, since the graph is connected). Specifically, for a constant static diameter, we get $O(\log n)$ rounds.

\begin{proof}
Fix the starting vertex $v_0$ which is informed.
First, we show that the probability of a vertex $v\neq v_0$ staying uninformed after $D$ rounds is at most $1-\eps^{D^2}$.

Consider $D$ consecutive graphs $G_1,\ldots, G_D$ in the smoothed graph. As the static diameter is $D$ in every round, there exists a path $P_v$ from $v_0$ to $v$ in $G_1$, whose length is at most $D$. Since we deal with targeted smoothing, each edge of $P_v$ that exists in a graph $G_i$ for some $1\leq i<D$ exists in $G_{i+1}$ probability either $1$ or $\eps$. So, for each such $i$, if all the edges of $P_v$ exist in $G_i$ then they all exist in $G_{i+1}$ with probability at least $\eps^D$.
Hence, the probability of the path $P_v$ existing in all $D$ graphs is at least $(\eps^D)^D = \eps^{D^2}$.

Fix a positive integer $t$, and consider $t$ consecutive sequences of $D$ graphs each.
On each sequence, we apply the above claim, and conclude that the probability of $v$ staying uninformed after theses $tD$ rounds is at most
\[(1-\eps^{D^2})^t \leq e^{-t\cdot \eps^{D^2}}.\]

For $t = (c+1)\log n \cdot (1/\eps)^{D^2}$ sequences, the probability of $v$ not being informed after $tD$ rounds is at most $n^{-(c+1)}$. A union bound over all $n-1$ vertices that need to be informed implies that $tD$ rounds suffice to fully flood the network with probability at least $1-n^{-c}$.
\end{proof}

The above theorem implies that if the static diameter of all graphs in the sequence is small, roughly $O(\sqrt{\log n})$, flooding is fast.
Next, we show that this is almost tight: if the diameter is in $\Omega(\log n)$, flooding cannot be done faster than in non-smoothed graphs.

\begin{theorem}
\label{thm: flooding targeted LB}
For every constant $0<\epsilon<1$,
there is a value $D\in \Theta(\log n)$
and an $\epsilon$-targeted smoothed dynamic graph
such that with high probability,
the diameter of the graph is $D$ and flooding on it takes $n-1$ rounds.
\end{theorem}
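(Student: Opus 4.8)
The plan is to construct an $\epsilon$-targeted smoothed dynamic graph with static diameter $D = \Theta(\log n)$ in which flooding is essentially as slow as in the noise-free case, i.e.\ only one new vertex is informed per round. The natural candidate is a ``spooling''-type construction adapted so that the single bridge edge between the informed and uninformed parts is never perturbed (since targeted noise only touches edges that change between $G_{i-1}$ and $G_i$), while the edges that \emph{do} change carry enough slack that, with high probability, none of the failures ever help the flooding. The key tension is: targeted noise makes an attempted change fail with probability $\epsilon$, so if the adversary wants to move a vertex from the uninformed side to the informed side it must ``retry'' — but a retry takes a round, and meanwhile the graph must stay connected and have diameter $D$. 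The role of a \emph{large} diameter $D = \Theta(\log n)$ is exactly to give the construction room: instead of one central head, the uninformed vertices are arranged in a balanced tree (or a path of stars) of depth $\sim D$, so that the handful of edges that change each round are ``deep'' in the structure, far from the bridge, and a failed change there cannot create a shortcut from an informed vertex to a not-yet-scheduled uninformed vertex.

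\textbf{Step 1: the construction.} I would partition $V$ into the informed spool $I_i$ and uninformed spool $\bar I_i$ at round $i$, connected by a single fixed bridge edge $e^\star$ whose endpoints are an informed vertex and the current ``head'' of the uninformed side. The informed side is kept as a star (or shallow tree) around a fixed vertex; the uninformed side is a balanced binary tree of depth $\Theta(\log n)$ rooted at the head. Each round, exactly the head moves across the bridge, so the changes between $G_i$ and $G_{i+1}$ are: delete the head's tree-edges on the uninformed side, re-attach the head to the informed star, promote one of its children to be the new head, and re-wire $O(1)$ (or $O(\log n)$) edges locally to restore a balanced tree of depth $D$ on the now-smaller uninformed set. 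Crucially $e^\star$ never changes, so it is immune to noise.

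\textbf{Step 2: the invariant and the bad event.} I would run an induction over rounds $i < n-1$ showing that, conditioned on a high-probability event, at most one vertex is newly informed per round (so $|I_i| \le i+1$) and the diameter stays $D$. The only way the induction breaks is if targeted noise on one of the changed edges creates an edge between $I_i$ and a vertex of $\bar I_i$ other than the scheduled head — a ``premature'' edge. I would bound, using the definition of $\epsilon$-targeted smoothing, the probability that any failed change produces such a premature informed-to-uninformed edge at a given round; since only $O(\log n)$ edges change and each fails independently with probability $\epsilon$, and the endpoints of the changed edges are confined to a small neighborhood of the old head, this probability is $O(\mathrm{poly}\log n / n)$ per round (or can be made so by padding the uninformed tree so that changed edges never have an informed endpoint on both sides of a potential shortcut). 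A union bound over $n-1$ rounds keeps the total failure probability $o(1)$. On the complement of all these bad events, the bridge is the unique informed--uninformed edge every round, so exactly the head gets informed each round, and flooding takes exactly $n-1$ rounds.

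\textbf{The main obstacle} I expect is controlling the interaction between ``failed changes'' and the diameter bound simultaneously: a targeted failure leaves an \emph{old} edge in place that the adversary wanted to delete, which can both (a) act as a premature shortcut and (b) temporarily violate the depth-$D$ structure. The construction must be robust to a constant fraction of attempted changes failing at every round, so the adversary cannot simply ``retry next round'' without the leftover edges accumulating. The fix is to design the uninformed tree with enough redundancy that any single leftover edge is harmless — it lies strictly inside $\bar I_i$, or inside $I_i$, and never bridges them except via $e^\star$ — and to show the depth can always be restored to $D$ within the same round using only edges that, even if their own rewiring fails, still leave a connected graph of diameter $\le D$. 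Making ``$D = \Theta(\log n)$ suffices but $D = \Theta(\sqrt{\log n})$ does not'' quantitatively tight is where the $\epsilon^{D^2}$ threshold from Theorem~\ref{thm: flooding targeted UB} must be matched: I would choose $D$ so that $\epsilon^{D^2} \le 1/n^2$, i.e.\ $D \ge \sqrt{2\log_{1/\epsilon} n} \cdot$(slack), and verify that at $D = \Theta(\log n)$ the per-round premature-edge probability is genuinely below $1/n$ so the union bound closes.
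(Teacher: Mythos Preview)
Your proposal has a genuine gap in the probability analysis that the tree construction cannot repair. In the $\epsilon$-targeted model, every attempted edge change fails with \emph{constant} probability $\epsilon$; there is no $1/n$ factor anywhere. When your head $h$ crosses to the informed side, the edges you try to delete are precisely the edges from $h$ to its (still uninformed) children in the tree. Each such deletion fails with probability $\epsilon$, and a failed deletion leaves an edge from the newly informed $h$ to an uninformed child that was \emph{not} the next scheduled head. That is exactly a premature shortcut, and it occurs with probability $\Theta(\epsilon)$ per round, not $O(\operatorname{polylog} n/n)$. Over $n-1$ rounds you expect $\Theta(\epsilon n)$ such events, so the union bound does not close. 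Your parenthetical escape---``pad the tree so that changed edges never have an informed endpoint''---is impossible here: the head is the vertex that just became informed, and its incident edges to the uninformed tree are exactly the edges that must change. (There is also a smaller inconsistency: you call $e^\star$ fixed, but if the head changes every round then one endpoint of $e^\star$ changes every round, so $e^\star$ is itself subject to noise.)

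The paper sidesteps this with a different idea. Its \emph{cassette graph} is a path $0,1,\ldots,n-1$ (so the informed set advances one step per round along the path, and the unique ``boundary'' path edge never changes), together with shortcut edges from $0$ and from $n-1$ to evenly spaced vertices $jt$, $t=\Theta(\log_{1/\epsilon} n)$, which give diameter $\Theta(t)$. The dangerous edges are the shortcuts $(jt,\,n-1)$, which would let the message jump ahead once vertex $jt$ is informed. The trick is to schedule their deletion $t$ rounds \emph{before} the informed front reaches $jt$: the adversary starts attempting to remove $(jt,n-1)$ already at round $(j-1)t+1$, so it has $t$ independent tries, and the edge survives all of them only with probability $\epsilon^{t}\le n^{-c}$. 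A union bound over the $O(n/t)$ shortcuts then works. The role of $D=\Theta(\log n)$ is thus not ``depth of a tree'' but ``how many rounds of advance notice each dangerous deletion gets''; this is the missing idea in your plan.
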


In order to prove this theorem, we present the \emph{dynamic cassette graph} (see  Figure~\ref{fig: cassette}).
Fix $\epsilon$ and $n$ as in the theorem statement, and let $t= \floor{c\log_{1/\eps} n}$ for a constant $c$ of choice.
The dynamic cassette graph on vertices $V = \set{0,\dots, n-1}$ is the dynamic graph $\Hcal= \set{G_1, \dots, G_n}$, where $G_i=(V,E_i)$ is defined by
\begin{align*}
E_i =   & \set{(j,j+1)\mid 0\leq j< n-1} \cup \\
        & \set{(0,jt)\mid 1\leq j\leq\floor{(i-1)/t}} \cup
          \set{(jt,n-1)\mid \floor{(i-1)/t}+2\leq j\leq \floor{(n-2)/t}}
.
\end{align*}

\begin{figure}
\centering
  \includegraphics[scale=0.3]{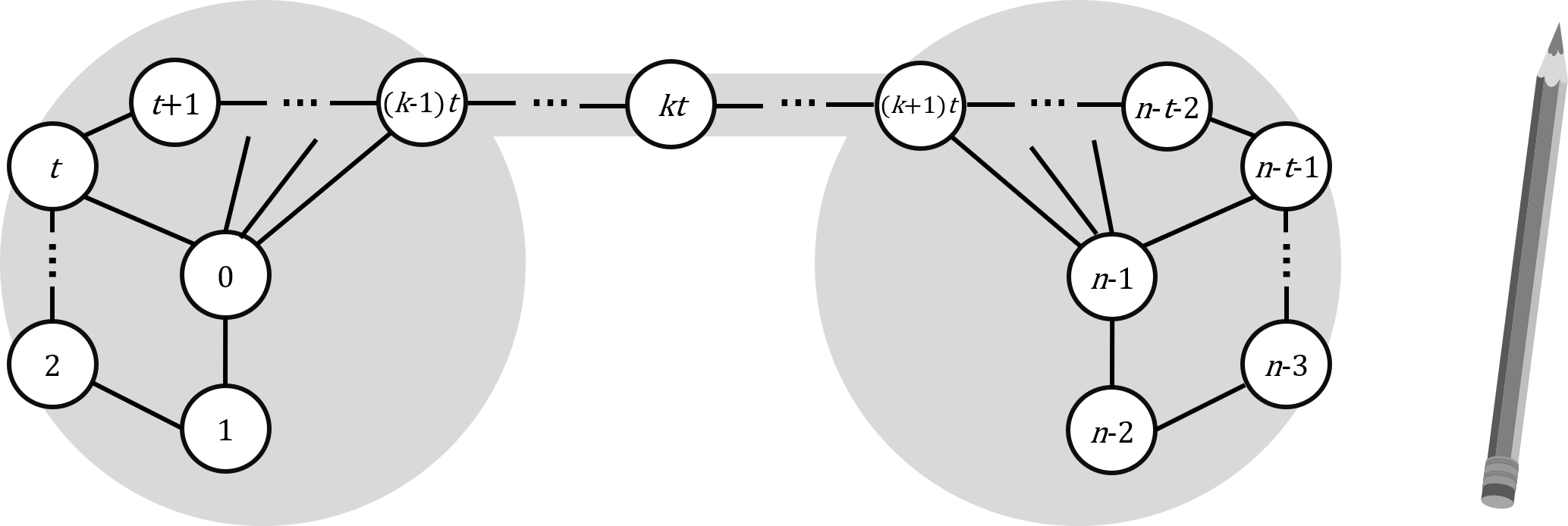}
  \caption{The cassette graph, $G^t_j$, where $j=kt$ and $n$ is some multiple of $t$.
  }
  \label{fig: cassette}
\end{figure}

% \begin{figure}[t]%-------------------------------------
% 	\begin{center}
% 		\includegraphics[scale=1,
% 		trim=5cm 25.5cm 4.5cm 1.6cm,clip]{cassette3.pdf}
%   \caption{The cassette graph, $G^t_j$, where $j=kt$ and $n$ is some multiple of $t$.
%   }
%   \label{fig: cassette2}
% 	\end{center}
% \end{figure}%--------------------------------------------
This graph is the path on $n$ vertices, with some additional edges connecting the first and last vertices to vertices in the set $\set{jt \mid 1\leq j\leq (n-2)/t}$; these will be referred to as \emph{shortcut vertices}, and the additional edges to them, \emph{shortcut edges}.
At the first graph, $G_1$, all shortcut vertices but the first are connected to the last vertex, $n-1$.
Then, one by one, the shortcut vertices disconnect from $n-1$, and soon after --- connect to $0$.
At each time interval $[(j-1)t+1,jt]$, all the shortcut vertices with index strictly smaller than $jt$ are connected to the vertex $0$, and all those with index strictly higher than $jt$ are connected to~$n-1$.

Consider the \emph{smoothed cassette graph} $\Hcal'$, i.e., the $\epsilon$-targeted smoothed dynamic graph derived from $\Hcal$.
The dynamic graph $\Hcal'$ can be interpreted as undergoing the following process:
during each time interval $[(j-1)t+1,jt]$, the adversary repeatedly tries to add a new edge $(0,(j-1)t)$, and remove the edge $((j+1)t, n-1)$.
The targeted noise creates a slowdown that might prevent this from happening right away,
yet for the right value of $t$, both changes indeed happen by the end of the time interval w.h.p. We state the following claim and direct the reader to Appendix~\ref{Sec: appendix thm proof} for a full proof.

% Specifically, we show that using the right value of $t$ (as a function of $\eps$ and $n$),
% two things happen with high probability:
% \begin{enumerate}
%     \item The flooding process takes $\Omega(n)$ rounds
%     \item The static diameter of the graph stays $O(t)$
% \end{enumerate}

% In fact, we claim that with high probability the shortcut edges are not useful for the flooding process, and so the flooding takes $n-1$ rounds --- the length of the path without shortcuts.

\begin{claim}
\label{claim: cassette specific shortcut edges whp}
For each $2\leq j\leq (n-2)/t$, the smoothed graph $G'_{jt}$
does not contain the edge $(0,(j-1)t)$ with probability at most $n^{-c}$,
and contains the edge $((j+1)t, n-1)$ with the same probability.
\end{claim}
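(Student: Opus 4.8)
The plan is to analyze the targeted smoothing process restricted to the two shortcut edges of interest during a single time interval $[(j-1)t+1, jt]$, which consists of $t$ rounds. Fix $j$ with $2\leq j\leq (n-2)/t$. Throughout this interval the adversary's ``temporary'' graphs $G_i$ all contain the edge $(0,(j-1)t)$ and all omit the edge $((j+1)t, n-1)$; the targeted smoothing acts on each of these two edges independently at random. First I would handle the edge $e^+=(0,(j-1)t)$ that the adversary is trying to add. In round $i$, if $e^+$ is already present in $G'_{i-1}$ it stays present (the targeted smoothing only removes edges of $G_i - G'_{i-1}$ and adds edges of $G'_{i-1} - G_i$, and $e^+\in E_i$ so it is never a candidate for removal once present). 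If $e^+$ is absent from $G'_{i-1}$, then since $e^+\in E_i$ it belongs to $G'_{i-1}-G_i$... wait, I should be careful with the direction: $e^+$ is in $G_i$ but (possibly) not in $G'_{i-1}$, so it is an edge the smoothing adds to $G_i$... actually re-reading the definition, the $\epsilon$-targeted smoothing of $G_i$ w.r.t.\ $G'_{i-1}$ adds each edge of $G'_{i-1}-G_i$ with probability $\epsilon$ and removes each edge of $G_i - G'_{i-1}$ with probability $\epsilon$. So if $e^+\in E_i\setminus E'_{i-1}$, the smoothing removes it from $G_i$ with probability $\epsilon$, i.e.\ it survives into $G'_i$ with probability $1-\epsilon$. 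Hence once $e^+$ appears it persists, and in each round it has probability at least $1-\epsilon$ of being present (or already is). So the probability that $e^+$ is absent from all of $G'_{(j-1)t+1},\dots,G'_{jt}$ is at most $\epsilon^{t}$.

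Symmetrically, for the edge $e^-=((j+1)t,n-1)$, which the adversary is trying to remove: $e^-\notin E_i$ throughout the interval, so whenever $e^-\notin E'_{i-1}$ it is not a removal candidate and not an addition candidate, hence stays absent; and whenever $e^-\in E'_{i-1}\setminus E_i$, the smoothing adds it back to $G_i$ with probability $\epsilon$, i.e.\ it is removed with probability $1-\epsilon$. So $e^-$ is present in $G'_{jt}$ only if in every one of the $t$ rounds the ``add-back'' coin came up, which has probability at most $\epsilon^{t}$. The one subtlety is the conditioning forced by the requirement ``if the created graph is not in $\gallowed$, the process is repeated'': one must check that conditioning on connectivity does not blow up these probabilities. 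Here $\gallowed$ is the family of connected graphs, and the underlying path $\set{(j,j+1)}$ is present in every $G_i$ and is never touched by the targeted smoothing (its edges lie in every $E_i$ and in every $E'_{i-1}$, so they are neither addition nor removal candidates), so the smoothed graph is automatically connected and the rejection step never triggers. This means the per-round coin flips on $e^+$ and $e^-$ are genuinely independent $\mathrm{Bernoulli}(\epsilon)$ (resp.\ $1-\epsilon$) variables with no conditioning correction.

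Finally I would plug in $t=\floor{c\log_{1/\epsilon} n}$: then $\epsilon^{t}\leq \epsilon^{\,c\log_{1/\epsilon}n - 1} = \epsilon^{-1}\cdot n^{-c}$, and absorbing the constant $\epsilon^{-1}$ into the choice of $c$ (or stating the bound with $c$ replaced by $c'=c-o(1)$, which is the standard convention in the paper) gives the claimed $n^{-c}$ bound for each of the two events. Since the claim is stated for a single $j$, no union bound is needed at this stage — the union over all $j$ and over the path-persistence events is done where the claim is applied in the proof of Theorem~\ref{thm: flooding targeted LB}.

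The main obstacle I anticipate is not the probability estimate itself, which is a one-line geometric-series argument, but the bookkeeping needed to verify the monotone/persistence behavior of the two shortcut edges under the exact wording of the targeted-smoothing definition — in particular making sure the ``repeat if not in $\gallowed$'' clause is vacuous here (via the always-present untouched Hamilton path) so that the $t$ rounds really do give independent coins, and correctly tracking that $G'_{i-1}$, not $G_i$ alone, is the reference graph each round so that an edge that ``made it'' in an earlier round cannot later be undone by the adversary's attempt (since the adversary only ever proposes $G_i$, and the smoothing is relative to the already-realized $G'_{i-1}$).
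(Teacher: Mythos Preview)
Your proposal is correct and follows essentially the same approach as the paper: track the two shortcut edges over the $t$ rounds of the interval, observe that once $e^+$ appears (resp.\ $e^-$ disappears) it stays that way, and that in each round the ``bad'' outcome occurs with probability exactly $\epsilon$, giving $\epsilon^t$. Your treatment is in fact slightly more careful than the paper's, since you explicitly verify that the $\gallowed$ rejection step is vacuous (the untouched Hamilton path guarantees connectivity), and you flag the $\epsilon^{-1}$ slack coming from the floor in $t=\lfloor c\log_{1/\epsilon}n\rfloor$, both of which the paper glosses over.
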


If the edge $(0,(j-1)t)$ exists in the smoothed graph $G'_{jt}$
then it also exists in all later graphs, $G_{j'}$ with $j'>jt$.
Similarly, if the edge $((j+1)t, n-1)$ does not exist in this graph, it also does not appear in later graphs.
A union bound thus extends the last claim as follows.

\begin{claim}
\label{claim: cassette all shortcut edges whp}
For each $2\leq j\leq (n-2)/t$, the smoothed graph $G'_{jt}$ and all later graphs
contain the edge $(0,(j-1)t)$ with probability at least $1-n^{-c+1}$,
and all these graphs do not contain the edge $((j+1)t, n-1)$ with the same probability.
\end{claim}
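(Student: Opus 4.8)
The plan is to derive this statement from Claim~\ref{claim: cassette specific shortcut edges whp} by combining two ingredients: a deterministic \emph{monotonicity} property of $\epsilon$-targeted smoothing, which upgrades ``$G'_{jt}$ behaves as desired'' to ``$G'_{jt}$ and every later graph behaves as desired'', and then a union bound over the choices of $j$, which turns the per-$j$ failure probability $n^{-c}$ of Claim~\ref{claim: cassette specific shortcut edges whp} into the claimed $n^{-c+1}$.

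For the monotonicity step, I would first record the following observation about a single smoothing round. An $\epsilon$-targeted smoothing of $G_i$ with respect to $G'_{i-1}$ is obtained from $G_i$ by (randomly) adding some edges of $G'_{i-1}-G_i$ and removing some edges of $G_i-G'_{i-1}$; in particular it never touches an edge that lies in both $G_i$ and $G'_{i-1}$, nor one that lies in neither. Consequently, if an edge $e$ satisfies $e\in G_i$ for every $i\ge i_0$, then $e\in G'_{i-1}$ together with $e\in G_i$ forces $e\in G'_i$, so by induction on $i\ge i_0$ the event $\{e\in G'_i\}$ is absorbing --- once it holds at some round it holds at every later round; symmetrically, if $e\notin G_i$ for every $i\ge i_0$ then $\{e\notin G'_i\}$ is absorbing. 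For the cassette graph this reasoning is not disturbed by the ``repeat if not in $\gallowed$'' clause, since the Hamiltonian path $\set{(j,j+1)\mid 0\le j<n-1}$ lies in every $G_i$, hence by the same absorbing argument in every $G'_i$, so every $G'_i$ is connected and the conditioning is vacuous. I then instantiate the observation with $i_0=jt$: by the definition of $\Hcal$, the edge $(0,(j-1)t)$ belongs to $G_i$ for every $i\ge jt$ and the edge $((j+1)t,n-1)$ is absent from $G_i$ for every $i\ge jt$, so if $(0,(j-1)t)\in G'_{jt}$ it belongs to every later smoothed graph, and if $((j+1)t,n-1)\notin G'_{jt}$ it is absent from every later smoothed graph --- this is exactly the sentence preceding the claim.

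It then remains to assemble the probabilities. By Claim~\ref{claim: cassette specific shortcut edges whp}, for each fixed $j$ with $2\le j\le (n-2)/t$ we have $\Pr[(0,(j-1)t)\notin G'_{jt}]\le n^{-c}$ and $\Pr[((j+1)t,n-1)\in G'_{jt}]\le n^{-c}$; combined with the monotonicity step, each of these also bounds the probability that the corresponding ``all later graphs'' statement fails for that $j$. Since there are at most $(n-2)/t\le n$ values of $j$, a union bound over them gives that with probability at least $1-n\cdot n^{-c}=1-n^{-c+1}$ the edge $(0,(j-1)t)$ lies in $G'_{jt}$ and in all later graphs for every $j$ simultaneously, and the same bound for the absence of $((j+1)t,n-1)$. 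The only point I expect to require genuine care is the monotonicity lemma itself: one must check against the precise definition of $\epsilon$-targeted smoothing that the perturbation acts solely on the symmetric difference $G_i\oplus G'_{i-1}$, and that the two shortcut edges are truly eventually stable in the adversarial sequence (present from round $jt$ on, respectively absent from round $jt$ on); granting this, everything else is a one-line union bound.
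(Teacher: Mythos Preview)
Your proposal is correct and follows the same approach as the paper: the paper's justification is precisely the two sentences preceding the claim, namely the monotonicity observation (once the edge $(0,(j-1)t)$ appears in $G'_{jt}$ it persists, and dually for $((j+1)t,n-1)$) followed by a union bound over $j$ to pass from the $n^{-c}$ of Claim~\ref{claim: cassette specific shortcut edges whp} to $n^{-c+1}$. Your write-up simply fleshes out these two steps, including the useful remark that the Hamiltonian path survives every smoothing round so the $\gallowed$-conditioning is vacuous.
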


Using this claim, Theorem~\ref{thm: flooding targeted LB} can easily be proven.
\begin{proof}[Proof of Theorem~\ref{thm: flooding targeted LB}]
Consider the smoothed dynamic cassette graph $\Hcal'$. We start by analyzing its diameter.

Let $G'_{j'}$ be a graph in $\Hcal'$, and pick a $j$ such that $jt\leq j'< (j+1)t$.
By Claim~\ref{claim: cassette all shortcut edges whp}, the graph $G'_{jt}$ and all later graphs contain the edge $(0,(j-1)t)$ with probability at least $1-n^{-c+2}$.
%All the graphs $G_{jt+1},\ldots,G_{j'}$ contain this edge, so if $G'_{jt}$ contains it, then so does $G'_{j'}$.
In addition, all the graphs $G_1,\ldots,G_{(j+1)t-1}$ contain the edge $((j+2)t, n-1)$ (and note that this is not a probabilistic claim).

The distance between every two shortcut vertices is $t$, so the distance from every vertex in the graph to the closest shortcut vertex is at most $t/2$.
Each shortcut vertex is directly connected to either $0$ or $n-1$ w.h.p., except for $jt$, who is connected to both by a $t+1$ path.
Finally, between the vertices there is a path of length $2t+2$, through $(j-1)t$ and $(j+1)t$.
Let us bound the length of a path between two vertices $i,i'$ (with upper bounds on each part): from $i$ to its closest shortcut vertex ($t/2$ hops), to $0$ or $n-1$ ($t+1$ hop), maybe to the other vertex between $0$ and $n-1$ ($2t+2$ hops), to the shortcut vertex closest to $i'$ ($t+1$ hop), and to $i'$ ($t/2$ hops).
This sums to a path of length at most $5t+4=\Theta(\log n)$.
A more detailed analysis can reduce this to roughly $3t$ hops.

For the flooding time, we use Claim~\ref{claim: cassette all shortcut edges whp} again, but now for the edges $((j+1)t, n-1)$ that do not appear $G'_{jt}$ and all later graphs w.h.p.
A simple induction on $j'=0,\ldots,n-1$ shows that after $j'$ rounds, i.e. in graph $G_{j'}$, only vertices $0,\ldots,j'$ are informed.
The base case is trivial. For the step, the only edges connecting informed vertices to uninformed vertices are $(j',j'+1)$, and edges from $0$ to shortcut vertices, which have the form $(0,tj)$ with $tj\leq j'$ --- this yields from the construction and always holds.
The only other type of possible edges connecting informed and uninformed vertices are of the form $(jt,n-1)$, with $jt\leq j'$. However, the claim implies that w.h.p., by round $j'$ none of these edges exist.
Hence, before round $n-1$ not all vertices are informed, and flooding takes $n-1$ rounds w.h.p.
\end{proof}

% \newpage

\hide{end of real content}

\iffalse
\subsection{Aggregation}
\begin{theorem}
\label{thm: aggregation lower bound}
For every algorithm $A$, there exists a dynamic pairing graph $\Hcal$ s.t. with probability at least $1/2$, $A$'s aggregation factor is $\Omega(n)$ times worse than the offline optimal aggregation factor achievable in a $(\epsilon,k)$-smoothed version of $\Hcal$. Where $k \leq n/ (32\log^2 n)$.
\end{theorem}
\Acomment{As for $D$ before, what is exactly the condition on $K$ here?}
\begin{proof}
Take Seth's lower bound and repeat every step $\log n$ times to get rid of the multiplicative noise.
\end{proof}

\subsection{Consensus}
\fi

%\bibliographystyle{alpha}
\bibliography{refs}

%\end{document}

%\newpage
\bigskip
\centerline{\Large\bf Appendix}

\appendix
\section{Omitted Proofs}
\label{Sec: appendix thm proof}
\begin{theorem-repeat}{thm: fractional LB 1}
   Fix $1 \leq k \leq n/16$ (not necessarily an integer). For any $k$-smoothed dynamic graph, for the flooding process to succeed with probability at least $1/2$, it must run for $\Omega\left(\min\set{n/k, n^{2/3}/k^{1/3}}\right)$ rounds.
\end{theorem-repeat}

\begin{proof}%[Proof of Theorem~\ref{thm: fractional LB 1}]
    We start by re-defining the spooling graph from \cite{DFGN18}:
    at time $i$, the edges are $\set{(j,i)}_{j=1}^{i-1}$, a star around the vertex $i$ (this will be the informed spool)
    and edges $\set{(i+1,j)}_{j=i+2}^{n}$, a star around the vertex $i+1$, which will be the right spool. The spools are than connected using $(i, i+1)$

    Next, we consider the first $R$ rounds of flooding and show that for small enough $R$, with at least constant probability, the message has not reached all the players.
    We observe the vertices $1,\dots,R$ that are going to be the heads of the uninformed spool.

    Note that the right spool is not called ``the uninformed spool'' for a reason: some vertices in it might get the message through the process. But every time they send the message to the head of the spool, it is being ``snatched'' to the informed spool before it gets to spread the message to the whole right spool. (we stress this observation, as it is going to serve us later in the text as well).
    We must deal with two bad cases:
    \begin{enumerate}
        \item Some relevant head $j < R$ gets the message in round $l < j-1$, meaning that he spreads the message to everyone on round $j-1$ when he becomes the head of the right spool.
        \item At any point in the process, more than $O(R)$ vertices in the set of ``right spool player'' ($\set{R+1,\dots,n}$) learned the message via a noisy edge that was added to the graph.
    \end{enumerate}

    Note that the second bad case is not about the end of the process (otherwise, it would suffice to ask that not \emph{all} vertices learnt the message). Rather, we ask for less than $O(R)$ vertices, as we need this to guarantee the first bad case only happens in rare occasions.

    To that end, we recall the players are numbered $1,\dots, n$. We denote by $I_r$ the set of informed vertices after round $r$ of the flooding.

    We carry on using a careful induction on the two claims combined. The induction is probabilistic, stating that each step has a small probability to fail us, and if it does not - it leaves us in a rather good state for the next step. We finish up by union bounding over all $R$ rounds.

    Let us denote by $A_i$ the good event of no future head vertex getting the rumor at round $i$ (and $A_{\leq i}$ for $A_1 \bigcap \dots \bigcup A_i$)
    .
    Formally we prove by induction that w.h.p: (1) $A_i$ occur at round $i$; (2) The expected size of $I_i$, conditioned on $A_{\leq i}$ relates to the expected value of $I_{i-1}$.

    For the first round $i=1$, only the starting vertex $1$ knows the message and the claims hold trivially.
    Let us assume that both claims hold for round $i$, we show that other than with small probability, both also hold for round $i+1$:

    We assume that $A_{\leq i}$ occurs (meaning in particular that $I_i \bigcap [R] = [i]$). Therefore, we can use Lemma~\ref{lem: UB adding from a set of edges} with $S = I_i \times ([R] \setminus [i+1])$. We then know that $A_{i+1}$ (and therefore $A_{\leq i+1}$) occurs with probability at least
    \[1 - c_2 k \size{S} / n^2 \leq c_2 k \size{I_i} R / n^2.\]

    we now focus on a vertex that is not a future head $v \in I_i \setminus [R]$, and note that within $k$ noisy edges in round $i+1$, similarly to Lemma~\ref{lem: UB adding from a set of edges}, the probability of adding an edge from the set $I_i \times \set{v}$, \emph{given the event $A_i$}, is at most
    \[c_2 k \size{I_i} / (n^2 - \size{I_i}\cdot R) \leq 2c_2 k \size{I_i} / n^2\]
    Where the statement is evident from the proof of Lemma~\ref{lem: UB adding from a set of edges} (in~\cite{DFGN18}), and the inequality is due to $I_i \cdot R \leq n^2/2$ for any realization of $I_i$.

    Therefore, if we denote by $\ind{v,i}$ the probability of $v$ being added at round $i$, we have
    \[\Expc{\ind{v,i}  | G_{\leq i}} \leq 2c_2 k \size{I_i} / n^2.\]
    Next, we sum up over all possible $v$'s (at most $n$ of them), to get that for any specific realization of $I_i$:
    \[\Expc{I_{i+1} | A_{\leq i+1}} \leq I_i + n\cdot \frac{2c_2 k \size{I_i}}{n^2} + 1\]
    where the expectation is taken over round $i+1$ alone, the $1$ comes from the guaranteed new informed vertex (the new head), and we use the guarantee of $A_i$ that the previous head was not informed.

    Taking an expectation over the noise in previous $i$ rounds as well, we get:
    \[\Expc{I_{i+1} | A_{\leq i+1}} \leq \Expc{I_i | A_{\leq i}}\cdot (1 + \frac{2c_2 k}{n}) + 1\]
    Now both expectations are over all noise (note that future noise cannot affect the value of~$I_i$).

    We denote $L_i := \Expc{I_i | A_{\leq i}}$ to analyse this multiplicative-additive progression.
    Up until round $R = n/(4c_2 k)$, this progression is at most $2$ at each step, giving us $L_{R} \leq 2R + 1 \leq O(n/k)$. We do not deal with rounds of dominant multiplicative factors, for the sake of our union bound. We get that after $R$ rounds, the expected number of players that know that rumor is far smaller than $n$.

    We now enter the probabilities to have:
    \begin{enumerate}
        \item Each inductive step had failure probability of at most $c_2 k\size{I_i}R / n^2$, where $I_i$ is smaller than $O(i) \leq O(R)$. Union bounding over all $R$ rounds, our failure probability is at most $10 c_2 k R^3 /n^2$.
        \item Using Markov inequality, since the expected number of informed vertices at round $R$ is $O(n/k)$, and so the probability at round $R$ of having more than $n/2$ informed vertices is at most $(n/k) / (n/2) \leq O(1/k)$.
    \end{enumerate}
    We note that the second failure probability is negligible, and the first (union bound over all steps) is at most $1/10$ whenever we have $R \leq n^{2/3} / (100 c_2 k)^{1/3}$.
    Overall, we need both the progression to be additive, and the union bound to work, which means the flooding fails with high probability whenever
    \[R = O\left(\min\set{\frac{n}{4c_2 k}, \frac{n^{2/3}}{(100 c_2 k)^{1/3}}}\right),\]
    completing the proof.
\end{proof}

\begin{theorem-repeat}{thm: fractional LB 2}
    Fix $0 < k < 1$. For any $k$-smoothed dynamic graph, for the flooding process to succeed with probability at least $1/2$, it must run for $\Omega\left(\min\set{n, n^{2/3}/k^{1/3}}\right)$ rounds.
\end{theorem-repeat}

\begin{proof}
    The proof follows an easier argument than the previous one, as no exponential growth in the number of informed players can ever happen. This means that our only concern is that of a future head prematurely learning the message.

    Let us focus on the first $R$ rounds. Using a standard Chernoff bound, we know that with high probability at most $10kR$ rounds contain any noise at all.
    All other rounds add exactly one new informed player per round, and the noisy rounds can add at most $1$ more player using a noisy edge.
    Hence, as long as $A_i$ keeps occurring (the events of no future head receiving the message prematurely), the inequality $I_i \leq 3i$ holds.

    However, at each non-noisy iteration, the probability of $G_i$ failing is exactly $0$, and at each iteration \emph{with noise}, the probability of $A_i$ failing is at most
    \[c_2 \size{I_i} R / n^2 \leq 3 c_2 i R / n^2 \leq 3 c_2 R^2 / n^2\]
    Applying a union bound over all $10 k R$ noisy rounds, we get the event $A_{\leq R}$ fails with probability at most
    \[30c_2 k R^3 / n^2,\]
    which is smaller than $1/10$ whenever $R = n^{2/3} / (300c_2 k)^{1/3}$.

    But this would mean that with high probability indeed $I_{i+1} \leq I_i + 2$ for any $i < R$, and thus no more than $3R$ players are informed after $R$ rounds.
    For any $k \geq 1/(10,000 c_2 n)$, we have $R < n/3$, or $3R < n$, concluding the protocol failed.

    On the other hand, for $k \leq  1/(10,000 c_2 n)$, considering $\delta n$ rounds for small enough $\delta$, we get that with constant probability no round invoked any noise, which means the flooding will take $\Omega(n)$, as is the case with no noise.
\end{proof}

    \begin{claim-repeat}{claim: expected informed vertices}
        When taking expectation over the noise we add, we have
        $$\Expc{\size{I_{i+1}}} \leq (1+c_2 k/n)\cdot \Expc{\size{I_{i}}} + 1 .$$
    \end{claim-repeat}
    Note that $I_i$ does not depend on the noise of rounds $i+1,\dots,R$, and so the left side takes expectation over $i+1$ rounds of noise, and the right side over $i$ rounds.

    \begin{proof}
        Indeed, just before the (i+1)$^{th}$ iteration, there are exactly $\size{I_i}$ informed vertices. Therefore, each uninformed vertex $v\in \bar{I}_i \setminus \set{u_i}$ has probability of at most $c_2 k \size{I_i} / n^2$ of obtaining a straight edge to an informed vertex (Using $S = I_i \times \set{v}$). As all this vertices were disconnected from $I_i$ before the noise, this is their only way to receive the message at this round.
        For $u_i$ we use the trivial upper bound of $1$ (in fact, it is even smaller: it has a small probability to get disconnected).
        For each $v\in \bar{I}_i$ we denote by $\ind{v}$ the event that $v$ learnt the message at round $i+1$.
        For each realization of $I_i$ after $i$ rounds, we have
        \begin{align*}
            \Expc{\size{I_{i+1}}}
            & = \size{I_i} + \sum_{v \in \bar{I}_i} \ind{v}\\
            & \leq \size{I_i} + 1 + (\size{\bar{I}_i}-1) c_2 k \size{I_i} / n^2\\
            & \leq (1 + c_2 k / n) \size{I_i} + 1
        \end{align*}
        where the expectation is over the noise in round $i+1$, and the second inequality uses the fact $\size{\bar{I}_i} \leq n$.

        Using linearity of this term, and taking expectation over the noise of the first $i$ rounds in both sides, the claim is proven.
    \end{proof}

\begin{claim-repeat}{claim: cassette specific shortcut edges whp}
For each $2\leq j\leq (n-2)/t$, the smoothed graph $G'_{jt}$
does not contain the edge $(0,(j-1)t)$ with probability at most $n^{-c}$,
and contains the edge $((j+1)t, n-1)$ with the same probability.
\end{claim-repeat}

\begin{proof}
We prove the first part, and the second one is dual using a similar argument. Fix $j$, $2\leq j\leq (n-2)/t$. Consider the edge $e = (0,(j-1)t)$, and denote by $A_i$ the event of $G'_{i}$ not containing the edge $e$. Under these notations, our goal is to upper bound $\Pr[A_{jt}]$.

First, note that the edge $e$ does not appear in any graph (before and after the smoothing) in the first $(j-1)t$ rounds, and so for $i \in \set{1, \dots, (j-1)t}$ we have:
\[\Pr[A_i] = 1.\]
Recall that each iteration imposes smoothing with respect to the previous (already smoothed) graph. Starting with $G_{(j-1)t+1}$, the non-smoothed graphs include $e$, and the targeted smoothing might disrupt this addition. However, once the addition happens, the smoothing process does not target it any more. So, for $i \in \set{(j-1)t+1, \dots, jt}$:
\[
\Pr[{A_i} | A_{i-1}] = \eps \text{, and }  \Pr[{A_i} | \bar{A}_{i-1}] = 0,
\]
which implies
\[\Pr[A_i] = \eps \cdot \Pr[A_{i=1}].\]
Starting with $\Pr[A_{(j-1)t}] = 1$, we finally conclude $\Pr[A_{jt}] = \eps^t$.

Fix $j$, $2\leq j\leq (n-2)/t$.
For $j=2$, note that the edge $(0,t)$ is not present in any graph so the claim regarding it trivially holds.

Consider the $t$ non-smoothed graphs $G_{(j-1)t+1},\ldots,G_{jt}$,
and note that all of them contain the edge $(0,(j-1)t)$.
In each of the corresponding smoothed graphs, $G'_{(j-1)t+1},\ldots,G'_{jt}$,
the edge $(0,(j-1)t)$ does not appear only if it did not appear in the previous graph,
in which case it does not appear in the current graph with probability $\epsilon$.
Hence, the probability this edge does not appear in the last graph $G'_{jt}$ is at most $\eps^t=n^{-c}$.
Similarly, the edge $((j+1)t, n-1)$ does not appear in all the graphs $G_{(j-1)t+1},\ldots,G_{jt}$,
so it appear in the last graph $G'_{jt}$ with probability at most~$n^{-c}$.
\end{proof}

\end{document}